\definecolor{SeaGreen}{RGB}{46,139,87}
\definecolor{Navy}{RGB}{0,0,128}
\newtheorem{theorem}{Theorem}
\newtheorem{proposition}{Proposition}
\newtheorem*{acknowledgment*}{Acknowledgment}
\newtheorem{remark}{Remark}
\newcommand{\R}{\mathbb R}
\newcommand{\C}{\mathbb C}
\newcommand{\A}{\mathcal A}
\newcommand{\B}{\mathcal B}
\DeclareMathOperator{\Div}{div}
\def\XXint#1#2#3{{\setbox0=\hbox{$#1{#2#3}{\int}$ }
\vcenter{\hbox{$#2#3$ }}\kern-.6\wd0}}
\def\Dg {{\mathcal D}}
\newcommand{\supp}{\mathrm{supp}\,}
\def\Hg {{\mathcal H}}
\newcommand{\LL}{\mathcal L}
\newcommand{\OO}{\mathcal O}
\def\Ug {{\mathcal U}}
\def\Wg {{\mathcal W}}
\begin{document}
\bibliographystyle{ieeetr}

\date{}\title{Existence and stability of superconducting solutions for
the Ginzburg-Landau equations in the presence of weak electric currents}
  \author{Yaniv
  Almog\thanks{Department of Mathematics, Louisiana State University,
    Baton Rouge, LA 70803, USA}, Leonid Berlyand\thanks{Department of
    Mathematics, Pennsylvania State University, University Park, PA
    16802, USA}, Dmitry Golovaty\thanks{Department of Mathematics, The University of Akron, Akron, Ohio 44325,
    USA}, and Itai Shafrir\thanks{Department of Mathematics, Technion -
    Israel Institute of Technology, 32000 Haifa, Israel}}

\maketitle
\begin{abstract}
  For a reduced Ginzburg-Landau model in which the magnetic field is
  neglected,  we prove, for weak electric currents, the existence of a
  steady-state solution in a vicinity of the purely superconducting state. We
  further show that this solution is linearly stable.
\end{abstract}
\section{Introduction}
\label{sec:2}

Superconducting materials are characterized by a complete loss of
resistivity at temperatures below some critical threshold value. In
this state, electrical current can flow through a superconducting
sample while generating only a vanishingly small voltage drop. If the current
is increased above a certain critical level, however,
superconductivity is destroyed and the material reverts to the normal
state---even while it remains below the critical temperature.

In this work, we study this phenomenon within the framework of the
time-dependent Ginzburg-Landau model \cite{chhe03,goel68}, presented
here in a dimensionless form
   \begin{equation}
\label{eq:42}
\left\{
     \begin{aligned}
  &\frac{\partial u}{\partial t} +   i\phi u = \left(\nabla - iA \right)^{2} u + u 
  \left( 1 - |u|^{2} \right) & \text{ in } & \Omega\times\R_+ \,,\\
 -& \kappa^2\nabla \times \nabla \times A + \sigma \left(\frac{\partial A}{\partial t} + 
 \nabla\phi\right)  =   \Im\{\bar{u} \nabla u\}  + |u|^{2}A &
\text{ in } & \Omega\times\R_+\,, \\
 &(i\nabla+A)u\cdot\nu=0 \quad\text{and}\quad -\sigma\left(\frac{\partial A}{\partial t} + 
 \nabla\phi\right) \cdot\nu=J& \text{ on } & \partial\Omega\times\R_+ \,, \\
&u(x_,0)=u_0\quad\text{and}\quad  A(x_,0)=A_0& \text{ in } & \Omega\,, \\
     \end{aligned}
     \right.
\end{equation}
In the above system of equations, $u$ is the order parameter with
$|u|$ representing the number density of superconducting electrons.
Materials for which $|u| = 1$ are said to be purely
superconducting while those for which $u = 0$ are said to be in the
normal state.  We denote the magnetic vector potential by $A$---so
that the magnetic field is given by $h=\nabla\times A$---and by $\phi$ the
electric scalar potential. The constants $\kappa$ and $\sigma$ are the
Ginzburg-Landau parameter and normal conductivity, of the
superconducting material, respectively, and the quantity
$-\sigma(A_t+\nabla\phi)$ is the normal current. All lengths in \eqref{eq:42}
have been scaled with respect to the coherence length $\xi$ that
characterizes spatial variations in $u$. The domain $\Omega\subset\R^2$
occupied by the superconducting sample is separated from its exterior
by the boundary $\partial\Omega$ that consists of two parts,
$\partial\Omega=\partial\Omega_c\cup\partial\Omega_i$. Here $\partial\Omega_c$ corresponds to the portion of the
boundary through which current enters and exits the sample, while the
rest of the boundary, denoted by $\partial\Omega_i$, is electrically insulated.
The function $J:\partial\Omega\to\R$ with $\supp(J)=\partial\Omega_c$ represents the normal
current entering the sample. Note, that it is possible to prescribe
the electric potential on $\partial\Omega$ instead of the current.

Except for the initial conditions,
\eqref{eq:42} is invariant under the gauge transformation \cite{chhe03}
\begin{displaymath}
  A \to A+\nabla\omega \quad ; \quad u\to u e^{i\omega}  \quad ; \quad \phi\to \phi-
  \frac{\partial\omega}{\partial t}
\end{displaymath}
for some smooth $\omega$. Finally, one has to prescribe $h$ at a single
point on $\partial\Omega$ for all $t>0$ (cf. \cite{al12}).

It has been demonstrated in \cite{al12}, for a fixed current, that in the
limit $\kappa\to\infty$ one can formally obtain from \eqref{eq:42} the
following system of equations
\begin{equation}
\label{eq:49}
\left\{
  \begin{aligned}
&  \frac{\partial u}{\partial t} + i\phi u = \Delta u + u
\left( 1 - | u|^{2} \right) & \qquad &\text{in } \Omega\times\R_+, \\
&  \sigma\Delta\phi = \nabla\cdot [\Im(\bar{ u} \nabla u)] 
& \qquad & \text{in }  \Omega\times\R_+, \\
&  \frac{\partial u}{\partial\nu}=0 \text{ and }-\sigma\frac{\partial\phi}{\partial\nu}=J &\qquad &\text{on } \partial\Omega\times\R_+,  \\
&  u(x_,0)= u_0 & \qquad &\text{in } \Omega \,.
\end{aligned}
\right.
\end{equation}

The principal goal of the present paper is to study \eqref{eq:49} in the large domain limit. To this
end, we apply the transformation 
\begin{displaymath}
 t^1=\epsilon^2t \; ; \; x^1=\epsilon x\; ; \;  J^1= \frac{J}{\epsilon} \; ;\;  \phi^1=\frac{\phi}{\epsilon^2} \;
  ; \; \sigma^1=\sigma\epsilon^2 \,,
\end{displaymath}
to (\ref{eq:49}) and drop the superscript $1$ for notational convenience to obtain 
\begin{subequations}
\label{eq:1}
\begin{empheq}[left={\empheqlbrace}]{alignat=2}
&\frac{\partial u}{\partial t} + \LL_\epsilon  u =0 \qquad & \text{in }& \Omega\times\R_+, \\
&\sigma\Delta\phi = \nabla\cdot [\Im(\bar{ u} \nabla u)] \qquad  & \text{in }& \Omega\times\R_+, \\
&\frac{\partial u}{\partial\nu}=0 \text{ and }-\sigma\frac{\partial\phi}{\partial\nu}=J \qquad & \text{on }& \partial\Omega\times\R_+, \\
&u(x,0)= u_0 \qquad & \text{in }& \Omega \,.
\end{empheq}
\end{subequations}
Here
\begin{equation}
\label{eq:50}
 \LL_\epsilon u =  i\phi u - \Delta u - \frac{u}{\epsilon^2}
\left( 1 - | u|^{2} \right)\,.
\end{equation}
We assume in the sequel that $\Omega$ in (\ref{eq:1}) is independent of
$\epsilon$. 

Note that \eqref{eq:1} remains invariant under the transformation
\begin{displaymath}
u\to e^{i\omega(t)}u \quad ; \quad \phi\to\phi+\frac{\partial\omega}{\partial t} \,.
\end{displaymath}
We thus choose
\begin{displaymath}
  \omega=-\int_0^t\frac{(|u|^2\phi)_\Omega(\tau)}{(|u|^2)_\Omega(\tau)} \,d\tau \,,
\end{displaymath}
which guarantees that we have for all $t>0$,
\begin{equation}
\label{eq:2}
  (|u|^2\phi)_\Omega(t)\equiv0,
\end{equation}
where
\[(f)_\Omega:=\frac{1}{|\Omega|}\int_\Omega f\,dx.\]
 \begin{figure}[htp]
  \centering
\scalebox{0.55}{ \input{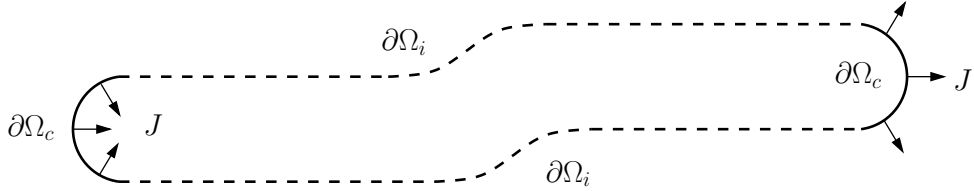}}
\caption{Schematics of a superconducting sample. The arrows denote the
  direction of the current flow.}
  \label{fig:1}
\end{figure}

The system of equations \eqref{eq:42} for a variety of domains and boundary conditions has
attracted significant interest among both physicists
\cite{ivko84}-\nocite{ivetal82,doetal98,voetal03}\cite{kabe14} and mathematicians \cite{al12}, \cite{ruetal07}-\nocite{rust08,ruetal10,ruetal10b}\cite{al08}.  A
different simplification of \eqref{eq:1} was derived by Du \& Gray
\cite{dugr96} for the same limit ($\kappa\to\infty$), but assuming that $J$ and $\sigma$ are of order 
$\OO(\kappa^2)$ (cf. \cite{duetal10}).

The focus of this work is mainly on the existence and stability of
steady-state solutions of \eqref{eq:1} for relatively small
currents. The main result that we prove is the following
\begin{theorem}
\label{thm:main}
Let $0<\delta_0$ and suppose that $\epsilon\|J\|_{H^{3/2}(\partial\Omega)}\leq\delta_0$. Then 
\begin{enumerate}
\item The system \eqref{eq:1} possesses a
steady-state solution $(u_s,\phi_s)\in H^2(\Omega,\C)\times H^2(\Omega,\R)$ whenever $\delta$ is sufficiently small.
Furthermore, there exists a constant $C(\Omega,\sigma)$, independent of both $\epsilon$ and $\delta_0$, such that
  \begin{displaymath}
    \|1-|u_s|\,\|_{2,2} \leq C\delta^2 \,.
  \end{displaymath}

\item The solution $(u_s,\phi_s)$ is linearly stable in the following sense. Given 
\begin{displaymath}
  \Ug= \{ u\in H^2(\Omega,\C) \, : \, \partial u/\partial\nu|_{\partial\Omega}=0 \,\} \,,
\end{displaymath} 
let
$\LL_\epsilon:\Ug\to L^2(\Omega,\C)$ be defined by \eqref{eq:50} where the potential $\phi(u)$ is assumed to solve 
\[
\left\{
\begin{aligned}
&\sigma\Delta\phi = \nabla\cdot [\Im(\bar{ u} \nabla u)] \qquad  & \text{in }& \Omega\times\R_+, \\
&-\sigma\frac{\partial\phi}{\partial\nu}=J \qquad & \text{on }& \partial\Omega\times\R_+, \\
&(|u|^2\phi)_\Omega(t)\equiv0. &&
\end{aligned}
\right.
\]
Then, there exist a $0\leq\delta_1\leq\delta$ such that, whenever
$\epsilon\|J\|_{H^{3/2}(\partial\Omega)}\leq\delta_1$, the semi-group associated with the
Fr\'{e}chet derivative $D \LL_\epsilon (u_s)$ is asymptotically stable.
\end{enumerate}
\end{theorem}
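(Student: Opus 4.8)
For part (i) the plan is to decouple the stationary system by writing $u=\rho e^{i\chi}$ with $\rho=|u|\ge 0$, which on a simply connected $\Omega$ recasts the equations as $-\Delta\rho+\rho|\nabla\chi|^2-\epsilon^{-2}\rho(1-\rho^2)=0$, $\nabla\cdot(\rho^2\nabla\chi)=\rho^2\phi$, and $\sigma\Delta\phi=\nabla\cdot(\rho^2\nabla\chi)$, so that $\sigma\Delta\phi=\rho^2\phi$, with $\partial_\nu\rho=\partial_\nu\chi=0$, $-\sigma\partial_\nu\phi=J$ and $(\rho^2\phi)_\Omega=0$ (this normalization forcing, and being consistent with, $\int_{\partial\Omega}J=0$). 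For $\rho$ in a small $H^2$-ball around $1$, the problem $-\sigma\Delta\phi+\rho^2\phi=0$ with Neumann datum $J$ is coercive, so it has a unique $\phi=\phi[\rho]\in H^2(\Omega)$ with $\|\phi[\rho]\|_{H^2}\le C(\sigma)\|J\|_{H^{3/2}(\partial\Omega)}$, depending Lipschitz-continuously on $\rho$; then the Neumann problem $\nabla\cdot(\rho^2\nabla\chi)=\rho^2\phi[\rho]$ (solvable since $\int_\Omega\rho^2\phi[\rho]=-\int_{\partial\Omega}J=0$), normalized by $(\chi)_\Omega=0$, gives $\chi=\chi[\rho]$ with $\|\nabla\chi[\rho]\|_{H^1}\le C\|\phi[\rho]\|_{L^2}$. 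Substituting into the modulus equation and setting $\rho=1+s$ yields $(-\Delta+2\epsilon^{-2})s=-(1+s)\,|\nabla\chi[1+s]|^2-\epsilon^{-2}(3s^2+s^3)=:\mathcal N(s)$, which I would solve by the contraction mapping theorem for $s\mapsto(-\Delta+2\epsilon^{-2})^{-1}\mathcal N(s)$ (Neumann inverse) on a ball $\{\|s\|_{H^2}\le C\delta^2\}$. The decisive quantitative input is that this resolvent is bounded $L^2\to H^2$ \emph{uniformly in $\epsilon\le 1$} (the singular zeroth-order term only strengthens coercivity), so $\|s\|_{H^2}\lesssim\|\mathcal N(s)\|_{L^2}\lesssim\|\nabla\chi\|_{L^4}^2+\epsilon^{-2}\|s\|_{H^2}^2(1+\|s\|_{H^2})$, and since $\|\nabla\chi\|_{L^4}^2\lesssim\|\phi[\rho]\|_{L^2}^2\lesssim\|J\|_{H^{3/2}}^2$ the fixed point closes at size $\delta^2$; this produces $(u_s,\phi_s)=(\rho_s e^{i\chi_s},\phi[\rho_s])\in H^2\times H^2$ together with $\|\,1-|u_s|\,\|_{2,2}=\|s\|_{H^2}\le C\delta^2$, the quadratic gain expressing that the modulus is driven only by the linearly small phase gradient.

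For part (ii), write $D\LL_\epsilon(u_s)v=-\Delta v+i\phi_s v-\epsilon^{-2}(1-|u_s|^2)v+2\epsilon^{-2}u_s\Re(\bar u_s v)+i\,\psi_v\,u_s=:(-\Delta+B(u_s))v$, where $\psi_v:=D\phi(u_s)[v]$ solves $\sigma\Delta\psi_v=\nabla\cdot\Im(\bar v\nabla u_s+\bar u_s\nabla v)$, $\partial_\nu\psi_v=0$, with the linearization of $(|u|^2\phi)_\Omega=0$. Since $B(u_s)$ maps $H^1(\Omega)$ boundedly into $L^2(\Omega)$ it is $(-\Delta)$-bounded with relative bound $0$, so $D\LL_\epsilon(u_s)$ generates an analytic semigroup on $L^2(\Omega,\C)$ and has compact resolvent; hence its spectrum is a discrete set of finite-multiplicity eigenvalues. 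At $J=0$ one has $u_s\equiv 1$, $\phi_s\equiv 0$, and in the splitting $v=v_1+iv_2$ the operator is block diagonal, $D\LL_\epsilon(1)=L_1\oplus L_2$, with $L_1=-\Delta+2\epsilon^{-2}$ (self-adjoint, $\ge 2\epsilon^{-2}$) and $L_2=-\Delta+\sigma^{-1}\big(\mathrm{Id}-(\cdot)_\Omega\big)$ (self-adjoint, $\ge 0$, kernel exactly the constants). Thus $\mathrm{spec}(D\LL_\epsilon(1))=\{0\}\cup S$ with $S\subset[\gamma_0,\infty)$ and $\gamma_0:=\min(2\epsilon^{-2},\mu_1+\sigma^{-1})>0$ ($\mu_1$ the first nonzero Neumann eigenvalue), and $0$ is a simple, isolated eigenvalue.

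The final step is a perturbation argument. Because $B(u_s)\to B(1)$ in $\mathcal L(H^1,L^2)$ as $\|J\|_{H^{3/2}}\to 0$, the resolvent $(D\LL_\epsilon(1)-\lambda)^{-1}$ is bounded uniformly on $\{\Re\lambda\le\gamma_0/2\}\setminus B_r(0)$ for small $r>0$; so for $\epsilon\|J\|_{H^{3/2}}\le\delta_1$ (with $\delta_1\le\delta$ chosen small, depending on $\epsilon,\sigma,\Omega$) the spectrum of $D\LL_\epsilon(u_s)$ avoids that region, while inside $B_r(0)$ it has exactly one eigenvalue, which by the gauge equivariance $\LL_\epsilon(e^{i\omega}u_s)=e^{i\omega}\LL_\epsilon(u_s)$, differentiated in $\omega$, is pinned at $0$ with eigenfunction $iu_s$. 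Hence $\mathrm{spec}(D\LL_\epsilon(u_s))\subset\{0\}\cup\{\Re\lambda\ge\gamma_0/2\}$ with $0$ simple; letting $P_0$ be the rank-one spectral projection onto $\ker D\LL_\epsilon(u_s)=\mathrm{span}_\R\{iu_s\}$, the part of $D\LL_\epsilon(u_s)$ on $\mathrm{Range}(\mathrm{Id}-P_0)$ has spectral bound $\ge\gamma_0/2$ and generates an analytic semigroup, so $\|e^{-tD\LL_\epsilon(u_s)}(\mathrm{Id}-P_0)\|\le Me^{-\gamma_0 t/4}$, while $e^{-tD\LL_\epsilon(u_s)}P_0=P_0$ is the neutral gauge mode --- which is exactly asymptotic stability of the semigroup modulo the one-dimensional gauge direction.

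The main obstacle is the interaction of the singular factor $\epsilon^{-2}$ with the perturbative arguments. In part (i) it is neutralized by the $\epsilon$-uniform $L^2\!\to\!H^2$ bound for $(-\Delta+2\epsilon^{-2})^{-1}$; in part (ii) the genuine difficulty is that $D\LL_\epsilon(u_s)$ is \emph{not self-adjoint} --- through the $i\phi_s$ multiplier and the nonlocal term $i\psi_v u_s$ --- so the spectral gap above the gauge eigenvalue cannot be read off a Rayleigh quotient but must be extracted from resolvent estimates, and the zero eigenvalue forced by gauge invariance must be isolated and kept simple. Quantifying how small $\|J\|_{H^{3/2}}$ must be, i.e.\ making $\delta_1$ effective in terms of $\epsilon$, $\sigma$ and $\Omega$, is where the bulk of the estimates lie.
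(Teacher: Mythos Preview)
Your overall architecture is reasonable, but there is a genuine gap in part (i) that propagates into part (ii): you lose the uniformity in $\epsilon$ that the theorem (and the Remark following it) demand.

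In part (i) you linearize about $\rho=1$ and invert $(-\Delta+2\epsilon^{-2})$. It is true that this resolvent is bounded $L^2\to H^2$ with an $\epsilon$-independent constant, but that bound carries \emph{no} gain of $\epsilon^2$: from $(-\Delta+2\epsilon^{-2})s=f$ one gets $\|s\|_2\le\tfrac{\epsilon^2}{2}\|f\|_2$ yet only $\|\Delta s\|_2\le 2\|f\|_2$. Since the forcing is $\|\mathcal N(s)\|_{L^2}\sim\|\nabla\chi\|_{L^4}^2\sim\|J\|^2$, your iteration produces $\|s\|_{H^2}\lesssim\|J\|^2$, not $\delta^2=\epsilon^2\|J\|^2$; the claim ``the fixed point closes at size $\delta^2$'' is therefore incorrect as stated. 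Worse, the contraction requires $\epsilon^{-2}r\ll1$ while the self-map needs $r\gtrsim\|J\|^2$, hence $\|J\|\lesssim\epsilon$, i.e.\ $\delta\lesssim\epsilon^2$. But the theorem must hold for all $0<\epsilon\le1$ with $\delta$ small and \emph{independent} of $\epsilon$; in particular when $\epsilon\ll\delta$ (so $\|J\|\gg1$), which your scheme does not cover. The paper handles this by first constructing an approximate profile $(\rho_0,\chi_0,\phi_0)$ with $\rho_0^2=1-\epsilon^2|\nabla\chi_0|^2$ and $\chi_0,\phi_0\in H^3$ via the implicit function theorem; this algebraic ansatz already satisfies $\|1-\rho_0\|_{2,2}\le C\delta^2$, and the Banach fixed point is then run on the \emph{remainder} $(\rho-\rho_0,\chi-\chi_0,\phi-\phi_0)$ in a weighted norm (with weight $\epsilon$ on $\|D^2\eta\|_2$), which is what makes both the $\delta^2$ bound and the $\epsilon$-uniform smallness threshold work.

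The same issue bites in part (ii). After conjugation by $e^{i\chi_s}$, the perturbation $\B-\B_0$ contains the first-order term $-2i\nabla\chi_s\cdot\nabla v$ and the potential $i\phi_s v$, both of size $O(\|J\|)$, not $O(\delta)$. Your resolvent-perturbation argument then needs $\|J\|$ small relative to the fixed gap $\mu_1+\sigma^{-1}$, again forcing $\delta\lesssim\epsilon$ rather than $\delta$ small uniformly; you flag this yourself when you let $\delta_1$ depend on $\epsilon$. The paper avoids perturbation theory altogether: it tests the eigenvalue system directly (multiply the $\rho$-equation by $\rho$, the $\chi$-equation by $\chi$, and the $\varphi$-equation by $\sigma\varphi-\chi$), and shows that all cross terms are controlled by $C\delta^2$ times the coercive quantities $\|\nabla\rho\|_2^2+\|\nabla\chi\|_2^2+\epsilon^{-2}\|\rho\|_2^2$. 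The point is that the dangerous $O(\|J\|)$ terms (e.g.\ $\langle\rho_s\nabla\chi,\rho\nabla\chi_s\rangle$) get paired against $\|\rho\|_2/\epsilon$ and thereby pick up the missing factor of $\epsilon$, turning $\|J\|$ into $\delta$. This energy computation is what yields $\Re\lambda>0$ for $\delta$ small, uniformly in $\epsilon$.
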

\begin{remark}
\normalfont Note that Theorem \ref{thm:main} is valid for every $0<\epsilon\leq1$ and in
particular when $\epsilon\ll\delta$.
\end{remark}
The rest of this paper is organized as follows. In the next
section, we prove the existence of steady-state solutions and
discuss their properties in
Theorem \ref{thm:stationary1}. The stability of these solutions (Proposition
\ref{prop:stable1}) is subsequently demonstrated in Section 3.

\section{Steady state solutions}
\label{sec:4}
In this section, we consider the steady-state solutions of
\eqref{eq:1}. Let $(u,\phi)$ denote a smooth solution of 
\begin{equation}
\label{eq:3}
\left\{
  \begin{aligned}
-&\Delta u + i\phi u = \frac{u}{\epsilon^2}\left( 1 - | u|^{2} \right) & \text{in }& \Omega\,, \\
&  \sigma\Delta\phi = \nabla\cdot [\Im(\bar{ u} \nabla u)] &  \text{in }&  \Omega\,, \\
&  \frac{\partial u}{\partial\nu}=0 \text{ and }-\sigma\frac{\partial\phi}{\partial\nu}=J & \text{on }& \partial\Omega\,,  \\
\end{aligned}
\right.
\end{equation}
If we set $u=\rho e^{i\chi}$ the problem takes the form
\begin{subequations}
\label{eq:4}
\begin{empheq}[left={\empheqlbrace}]{alignat=2}
-&\Delta\rho+\rho|\nabla\chi|^2 = \frac{\rho}{\epsilon^2}(1 - \rho^2) \qquad  & \text{in }& \Omega\,, \\
&\Div(\rho^2\nabla\chi)=\rho^2\phi \qquad & \text{in }& \Omega\,, \\
&  \sigma\Delta\phi = \Div(\rho^2\nabla\chi) \qquad  &  \text{in }&  \Omega\,, \\
&  \frac{\partial\rho}{\partial\nu}=\frac{\partial\chi}{\partial\nu}=0 \text{ and }-\sigma\frac{\partial\phi}{\partial\nu}=J \qquad & \text{on }& \partial\Omega\,,  \\
& (\chi)_\Omega=0\,. &&
\end{empheq}
\end{subequations}
In what follows, we assume that $J$ belongs to $H^{3/2}(\partial\Omega)$ and set
$\|J\|$ to be the $H^{3/2}$-norm of $J$.  Note that (\ref{eq:4}f) is
imposed in order to eliminate the degree of freedom that results from
the invariance of (\ref{eq:4}a-e) with respect to the transformation
$\chi\to\chi+C$, for any constant $C$. Moreover, any solution of
\eqref{eq:4} must satisfy
\begin{equation}
  \label{eq:41}
 \int_\Omega \rho^2\phi=0\,,
\end{equation}
as can be easily verified by integrating
(\ref{eq:4}b) and then using  (\ref{eq:4}d,e).

Assuming that current is sufficiently small, we seek an approximation
to the solution of \eqref{eq:4} that would be uniform in $\epsilon$ and, in
particular, would remain valid in the limit $\epsilon\to0$. To this end, we
fix the value of $\sigma$ while allowing for some dependence of $J$ on
$\epsilon$: a point that will be clarified in the sequel. The approximate
solution $(u,\phi)=(\rho_0e^{i\chi_0},\phi_0)$ when $\epsilon\ll1$ should satisfy
\begin{equation}
\label{eq:5}
\left\{
\begin{aligned}
  &\rho_0^2 = 1-\epsilon^2|\nabla\chi_0|^2 & \text{in }& \Omega\,, \\
   &-\sigma\Delta\phi_0 + \rho_0^2\phi_0 = 0 & \text{in }& \Omega\,, \\
    &\Div (\rho_0^2\nabla\chi_0) = \rho_0^2\phi_0  & \text{in }& \Omega\,, \\
    &\frac{\partial\phi_0}{\partial\nu} = -\frac{J}{\sigma} & \text{on }& \partial\Omega_c\,, \\
    &\frac{\partial\chi_0}{\partial\nu} = 0 & \text{on }& \partial\Omega_i\,, \\
    &(\chi_0)_\Omega =0 \,. &&
\end{aligned}
\right.
\end{equation}
Note that the only term dropped from \eqref{eq:4} to obtain
\eqref{eq:5} is $-\Delta\rho$ in (\ref{eq:4}a).  We first prove the existence
of solutions to \eqref{eq:5} for a sufficiently small 
\begin{displaymath}
  \delta:=\|J\|\epsilon.
\end{displaymath}
As will become clear later on, the solution of \eqref{eq:5},  whose
existence is proved below, serves a good approximation for a solution
of (\ref{eq:4}) whenver $\delta$ is sufficiently small, even if $\epsilon$ is
bounded away from zero.
\begin{proposition}
\label{prop:implicit}
 Let
 \begin{displaymath}
   \Hg_1 = \{ \phi\in H^3(\Omega)\,| \, \partial\phi/\partial\nu|_{\partial\Omega} = 0 \,\} \; ;
   \;  \Hg_2 = \{ \chi\in H^3(\Omega)\,| \, \partial \chi/\partial\nu|_{\partial\Omega} = 0,\  (\chi)_\Omega=0\} \,,
 \end{displaymath}
 and let $ \Wg_1 = \Hg_1\times\Hg_2$. There exist positive $\delta_0$
 and $C(\Omega,\sigma)$, such that the problem \eqref{eq:5} possesses a solution $(\chi_0,\phi_0)\in\Wg_1$
 satisfying
\begin{equation}
\label{eq:6}
  \|\chi_0\|_{3,2} + \|\phi_0\|_{3,2}  \leq C \|J\| \,,
\end{equation}
and
\begin{equation}
\label{eq:7}
  \|(1-\rho_0)\|_{2,2} \leq C\delta^2 \,,
\end{equation}
for all $0<\delta<\delta_0$ and
 $\sigma>0$.
\end{proposition}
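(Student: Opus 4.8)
The plan is to recast \eqref{eq:5} as a fixed-point equation for the single unknown $\chi_0\in\Hg_2$ and to solve it by the contraction mapping principle --- equivalently, by applying the implicit function theorem at the parameter value $\delta=0$ --- keeping track of constants carefully enough to read off \eqref{eq:6} and \eqref{eq:7}. Throughout, we shall use the compatibility condition $\int_{\partial\Omega}J=0$ (conservation of the total current through the sample), which is in fact necessary for \eqref{eq:4} to be solvable: integrate (\ref{eq:4}c) over $\Omega$ and use (\ref{eq:4}d). We also take $\partial\phi_0/\partial\nu=-J/\sigma$ to be imposed on all of $\partial\Omega$ (it vanishes on $\partial\Omega_i$ since $\supp J=\partial\Omega_c$), consistently with \eqref{eq:4}.

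Fix $\sigma>0$ and write $\delta=\epsilon\|J\|$. For $\chi$ in the ball $B_R:=\{\chi\in\Hg_2:\ \|\chi\|_{3,2}\le R\}$ and $\delta$ small enough that $\epsilon^2\|\nabla\chi\|_{L^\infty}^2\le C\epsilon^2R^2<\tfrac12$, the coefficient $\rho^2(\chi):=1-\epsilon^2|\nabla\chi|^2\in H^2(\Omega)$ satisfies $\tfrac12\le\rho^2(\chi)\le1$. I would first define $\Phi(\chi)$ as the unique weak solution of
\begin{displaymath}
-\sigma\Delta\phi+\rho^2(\chi)\phi=0\ \text{ in }\ \Omega,\qquad -\sigma\,\partial\phi/\partial\nu=J\ \text{ on }\ \partial\Omega;
\end{displaymath}
coercivity of the associated bilinear form (from $\rho^2(\chi)\ge\tfrac12$) gives existence, uniqueness and $\|\Phi(\chi)\|_{1,2}\le C(\Omega,\sigma)\|J\|$, and since the Laplacian has constant coefficients, $\rho^2(\chi)\in H^2$ and $J\in H^{3/2}(\partial\Omega)$, elliptic regularity upgrades this to $\Phi(\chi)\in H^3(\Omega)$ with $\|\Phi(\chi)\|_{3,2}\le C(\Omega,\sigma)\|J\|$. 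Integrating the equation for $\Phi(\chi)$ and using $\int_{\partial\Omega}J=0$ gives $\int_\Omega\rho^2(\chi)\Phi(\chi)=0$; together with $\int_\Omega\Div(|\nabla\chi|^2\nabla\chi)=\int_{\partial\Omega}|\nabla\chi|^2\,\partial\chi/\partial\nu=0$ this makes the right-hand side of the Neumann problem
\begin{displaymath}
\Delta\chi'=\rho^2(\chi)\Phi(\chi)+\epsilon^2\Div\!\big(|\nabla\chi|^2\nabla\chi\big)\ \text{ in }\ \Omega,\qquad \partial\chi'/\partial\nu=0\ \text{ on }\ \partial\Omega,\qquad (\chi')_\Omega=0,
\end{displaymath}
a mean-zero element of $H^1(\Omega)$, so this problem has a unique solution $\chi'=:T(\chi)\in\Hg_2$. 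A fixed point $\chi_0=T(\chi_0)$, together with $\phi_0:=\Phi(\chi_0)$, solves \eqref{eq:5}: indeed $\Div(\rho^2(\chi_0)\nabla\chi_0)=\Delta\chi_0-\epsilon^2\Div(|\nabla\chi_0|^2\nabla\chi_0)=\rho^2(\chi_0)\phi_0$.

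Next I would run the fixed-point argument on $B_R$. At $\delta=0$ the map is the constant $T\equiv\chi^{(0)}$, where $\Delta\chi^{(0)}=\Phi(0)$, $\partial\chi^{(0)}/\partial\nu=0$, $(\chi^{(0)})_\Omega=0$ (solvable since $\int_\Omega\Phi(0)=-\int_{\partial\Omega}J=0$) and $\|\chi^{(0)}\|_{3,2}\le C_0\|J\|$ for some $C_0=C_0(\Omega,\sigma)$; I would then fix $R:=2C_0\|J\|$. Subtracting the Neumann problems and the $\Phi$-equations at $\chi$ and at $0$, one finds
\begin{displaymath}
\Delta\big(T(\chi)-\chi^{(0)}\big)=\big(\Phi(\chi)-\Phi(0)\big)-\epsilon^2|\nabla\chi|^2\Phi(\chi)+\epsilon^2\Div\!\big(|\nabla\chi|^2\nabla\chi\big),
\end{displaymath}
with $\Phi(\chi)-\Phi(0)$ solving $-\sigma\Delta(\cdot)+\rho^2(\chi)(\cdot)=\epsilon^2|\nabla\chi|^2\Phi(0)$ under homogeneous Neumann data. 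Since $H^2(\Omega)$ is a Banach algebra in dimension two, each term on the right is $O(\delta^2\|J\|)$ in $H^1$, hence $\|T(\chi)-\chi^{(0)}\|_{3,2}\le C(\Omega,\sigma)\delta^2\|J\|$, which is $<C_0\|J\|$ for $\delta<\delta_0(\Omega,\sigma)$; so $T$ maps $B_R$ into itself. A parallel estimate, using that $\chi\mapsto\Phi(\chi)$ is Lipschitz with constant $O(\delta^2)$ and that $p\mapsto|p|^2p$ is locally Lipschitz, yields $\|T(\chi_1)-T(\chi_2)\|_{3,2}\le C(\Omega,\sigma)\delta^2\|\chi_1-\chi_2\|_{3,2}$, a contraction once $\delta$ is small. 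Banach's theorem then produces $(\chi_0,\phi_0)$ with $\|\chi_0\|_{3,2}\le R\le C\|J\|$ and $\|\phi_0\|_{3,2}\le C\|J\|$, which is \eqref{eq:6}. For \eqref{eq:7}, set $w:=\epsilon^2|\nabla\chi_0|^2$, so $\|w\|_{L^\infty}\le C\delta^2<\tfrac12$ and $\|w\|_{2,2}\le C\epsilon^2\|\chi_0\|_{3,2}^2\le C\delta^2$; writing $1-\rho_0=1-(1-w)^{1/2}=w\,g(w)$ with $g$ smooth and bounded together with its derivatives on $[0,\tfrac12]$, the product and composition estimates in $H^2(\Omega)$ give $\|1-\rho_0\|_{2,2}\le C\|w\|_{2,2}\le C(\Omega,\sigma)\delta^2$.

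The main obstacle is twofold. First, one must see why the auxiliary Neumann problem defining $T(\chi)$ is solvable at all: this rests on the identity $\int_\Omega\rho^2(\chi)\Phi(\chi)=0$, which is precisely where current conservation $\int_{\partial\Omega}J=0$ is used, and the construction collapses without it. Second, and more delicate, is the accounting: every term responsible for the nonlinearity and for the $\phi$--$\chi$ coupling must be shown to carry the full factor $\delta^2=\epsilon^2\|J\|^2$ (not $\epsilon^2$ or $\|J\|^2$ separately), so that a single smallness threshold on $\delta$ makes $T$ simultaneously a self-map of $B_R$ and a contraction, with constants depending only on $\Omega$ and $\sigma$ --- in particular uniformly for all $\epsilon\in(0,1]$. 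The elliptic regularity and Moser-type product/composition estimates used throughout are routine in view of the two-dimensional algebra property of $H^2(\Omega)$.
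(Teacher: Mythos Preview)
Your proof is correct and follows the same fundamental strategy as the paper---a perturbation argument around $\delta=0$---though organized differently. The paper first normalizes by $\|J\|$ (setting $\tilde\chi_0=\chi_0/\|J\|$, $\tilde\phi_0=\phi_0/\|J\|$), so that the resulting map $F$ depends only on the single parameter $\delta$ and not on $\epsilon$ or $\|J\|$ separately, and then applies the implicit function theorem to the \emph{pair} of homogeneous-Neumann corrections $(\varphi_\delta,\omega_\delta)\in\Wg_1$ to a fixed linear approximant $(\tilde\phi_{0,0},\tilde\chi_{0,0})$. You instead eliminate $\phi$ at the outset by solving for it as $\Phi(\chi)$, reduce to a single fixed-point equation $\chi=T(\chi)$, and run Banach's contraction theorem directly, tracking the $\delta^2$ factors by hand. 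The paper's normalization buys automatic uniformity in $\epsilon$ (once normalized, $F$ literally does not contain $\epsilon$), whereas your reduction to one unknown makes the iteration more explicit and forces you to verify the Neumann compatibility for $T(\chi)$---a point the paper passes over silently. Both routes are standard and essentially equivalent here; your explicit use of $\int_{\partial\Omega}J=0$ to ensure solvability is a welcome clarification.
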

\begin{proof}
  We make use of the implicit function theorem to prove the
  proposition. Denote by $\phi_{0,0}$ and $\chi_{0,0}$ the  solutions of 
\begin{subequations}
  \label{eq:8}
  \begin{equation}
  \begin{dcases}
    -\sigma\Delta\phi_{0,0} + \phi_{0,0} = 0 & \text{in } \Omega, \\
    \frac{\partial\phi_{0,0}}{\partial\nu} = -\frac{J}{\sigma} & \text{on } \partial\Omega \,,
  \end{dcases}
\end{equation}
and
\begin{equation}
  \begin{dcases}
    \Delta\chi_{0,0} = \phi_{0,0}  & \text{in } \Omega\,, \\[1.2ex]
    \frac{\partial\chi_{0,0}}{\partial\nu} = 0 & \text{on } \partial\Omega\,, \\[1.2ex]
    \left(\chi_{0,0}\right)_\Omega =0 \,,
  \end{dcases}
\end{equation}
\end{subequations}
respectively. For convenience we normalize the various fields by $\|J\|$ (we assume
$\|J\|>0$):
  \begin{equation}
\label{eq:9}
    \tilde{\chi}_0 =  \frac{\chi_0}{\|J\|} \quad ; \quad    \tilde{\phi}_0 =
    \frac{\phi_0}{\|J\|}\,,
  \end{equation}
and
  \begin{displaymath}
    \tilde{\chi}_{0,0} =  \frac{\chi_{0,0}}{\|J\|} \quad ; \quad    \tilde{\phi}_{0,0} =
    \frac{\phi_{0,0}}{\|J\|}\,.
  \end{displaymath}
Then we set
\begin{equation}
\label{eq:52}
  \tilde{\chi}_0=\tilde{\chi}_{0,0} + \omega_\delta \quad ; \quad
  \tilde{\phi}_0=\tilde{\phi}_{0,0} + \varphi_\delta \,.
\end{equation}

We begin by making the trivial observation that $ (\tilde{\phi}_{0,0}
,\tilde{\chi}_{0,0})\in\Wg_1$, i.e.,
\begin{equation}
\label{eq:10}
    \|\tilde{\chi}_{0,0}\|_{3,2} + \|\tilde{\phi}_{0,0}\|_{3,2} \leq C\,.
\end{equation}
We then define $F:\Wg_1\times \R \to H^1(\Omega,\R^2)$ by
\begin{displaymath}
  F(\varphi_\delta,\omega_\delta,\delta) =
  \begin{bmatrix}
    -\sigma\Delta\varphi_\delta + \varphi_\delta - \delta^2|\nabla\tilde{\chi}_0|^2\tilde{\phi}_0 \\[1.5ex]
    -\Delta\omega_\delta  + \varphi_\delta - \delta^2|\nabla\tilde{\chi}_0|^2\tilde{\phi}_0 + \delta^2\Div \big(|\nabla\tilde{\chi}_0|^2\nabla\tilde{\chi}_0\big)
  \end{bmatrix}\,.
\end{displaymath}
Note that (\ref{eq:52}) provides a one-to-one correspondence between the solutions of
$F(\varphi_\delta,\omega_\delta,\delta)=0$ and the solutions of \eqref{eq:5}.  It can be easily verified that $F$ is well-defined
because, by Sobolev embeddings,
\begin{multline*}
\||\nabla\tilde{\chi}_0|^2\tilde{\phi}_0 \|_{1,2} +  \|\Div \big(|\nabla\tilde{\chi}_0|^2\nabla\tilde{\chi}_0\big)\|_{1,2} \leq
 C\big( \big[\|\nabla\tilde{\chi}_0\|_\infty^2\|\tilde{\phi}_0\|_{1,2}\\
  +\|\nabla\tilde{\chi}_0\|_\infty\|\tilde{\chi}_0\|_{2,2}\|\tilde{\phi}_0\|_\infty +
  \|\nabla\tilde{\chi}_0\|_\infty^2\|\tilde{\chi}_0\|_{3 ,2}  + 
 \|\nabla\tilde{\chi}_0\|_\infty\|\tilde{\chi}_0\|_{2,4}^2\big] \\
\leq C  \big[\|\tilde{\chi}_0\|_{3 ,2}^3 +\|\tilde{\chi}_0\|_{3 ,2}^2
\|\tilde{\phi}_0\|_{2,2} \Big] \,.
\end{multline*}
Furthermore, we have $F(0,0,0)=0$. 

Let $\Dg:\Wg_1\times\R\to H^1(\Omega,\R^2)$ denote the linear operator
\begin{equation}
\label{eq:11}
  \Dg(\varphi,\omega,\delta) =   
\begin{bmatrix}
    \Dg_1(\varphi,\omega,\delta) \\
    \Dg_2(\varphi,\omega,\delta),
  \end{bmatrix}\,,
\end{equation}
where
\[\Dg_1(\varphi,\omega,\delta)=-\sigma\Delta\varphi + \varphi -2\delta^2\tilde{\phi}_0\nabla\tilde{\chi}_0\cdot\nabla\omega-\delta^2|\nabla\tilde{\chi}_0|^2\varphi\]
and
\begin{multline*}
\Dg_2(\varphi,\omega,\delta)=-\Delta\omega +\varphi -2\delta^2\tilde{\phi}_0\nabla\tilde{\chi}_0\cdot\nabla\omega \\ -\delta^2|\nabla\tilde{\chi}_0|^2\varphi+\delta^2\Div
    \big(|\nabla\tilde{\chi}_0|^2\nabla\omega +2(\nabla\tilde{\chi}_0\cdot\nabla\omega)\nabla\tilde{\chi}_0\big)
\end{multline*}
By using the same approach that we used to show that $F$ is
well-defined, it can be verified that
\begin{multline}
\label{eq:12}
  \|\Dg(\varphi,\omega,\delta)\|_{1,2} \leq C  \big[\delta^2\|\tilde{\chi}_0\|_{3
    ,2}^2\|\omega\|_{3 ,2}  + \|\varphi\|_{3 ,2} + \|\omega\|_{3 ,2} \\
+ \delta^2\|\tilde{\phi}_0\|_{3,2}\|\omega\|_{3 ,2}\|\tilde{\chi}_0\|_{3,2} + \delta^2\|\tilde{\chi}_0\|_{3,2}^2\|\varphi\|_{3 ,2} \big] \,.
\end{multline}
Similarly, we can demonstrate that for any
$(\tilde{\phi}_0,\tilde{\chi}_0)\in\Wg_1$ and $(\varphi,\omega)\in\Wg_1$ we have 
\begin{multline*}
  \|F(\varphi_\delta+\varphi,\omega_\delta+\omega,\delta)-F(\varphi_\delta,\omega_\delta,\delta)-\Dg(\varphi,\omega,\delta)\|_{1,2} \leq \\C\delta^2\big[\big(\|\tilde{\chi}_0\|_{3
    ,2}+ \|\tilde{\phi}_0\|_{3,2}\big)\|\omega\|_{3 ,2}^2 +
  \|\tilde{\chi}_0\|_{3,2}\|\omega\|_{3 ,2} \|\varphi\|_{3 ,2} + \|\omega\|_{3
    ,2}^2 \big( \|\omega\|_{3 ,2} + \|\varphi\|_{3 ,2}\big) \big]\,.
\end{multline*}
From the above inequality and \eqref{eq:12} it follows that $\Dg$ is
the Fr\'echet derivative of $F$ with respect to $(\varphi_\delta,\omega_\delta)$, and
that it is continuous in any neighborhood of $(0,0,0)$ in $\Wg_1\times\R$.

We can now conclude from \eqref{eq:11} that at $(0,0,0)$,
\begin{displaymath}
  DF(\varphi,\omega,0) =   
\begin{bmatrix}
    -\sigma\Delta\varphi + \varphi  \\
    -\Delta\omega +\varphi 
  \end{bmatrix}\,.
\end{displaymath}
It can be easily shown that $DF:\Wg_1\to H^1(\Omega,\R^2)$ is
invertible. 
Since $(-\sigma\Delta+1):\Hg_1\to H^1(\Omega)$ and $-\Delta:\Hg_2\to H^1(\Omega)$ are both
invertible, we have
\begin{displaymath}
  \|(DF)^{-1}\| \leq \|(-\sigma\Delta+1)^{-1}\| +  \|(-\Delta)^{-1})\| (1+\|(-\sigma\Delta+1)^{-1}\| )\,.
\end{displaymath}
(Note that $-\Delta$ is invertible since the average of $\omega$ in $\Omega$ must
vanish.)  Consequently, by the implicit function theorem
(cf. \cite{ni01}, for instance) we can find $\delta_0>0$ such that for every
$0<\delta<\delta_0$ there exists $(\varphi_\delta,\omega_\delta)\in\Wg_1$ for which
$F(\varphi_\delta,\omega_\delta,\delta)=0$. It readily follows that $(\varphi_\delta,\omega_\delta)$ converges in
$H^3(\Omega)\times H^3(\Omega)$ to $(0,0)$ as $\delta\to0$. In particular, we
obtain that 
\begin{equation}
\label{eq:13}
 \|\omega_\delta\|_{3,2}+ \|\varphi_\delta\|_{3,2}\xrightarrow[\delta\to0]{}0 \,.
\end{equation}
Combining the above  with \eqref{eq:10} completes the proof of
\eqref{eq:6}. The proof of \eqref{eq:7} follows as well, since
\begin{displaymath}
  \|(1-\rho_0)\|_{2,2}\leq C\delta^2\left(\|\nabla\tilde{\chi}_0\|_\infty \|
  \tilde{\chi}_0\|_{3,2} +  \|\tilde{\chi}_0\|_{2,4}^2\right) \,.
\end{displaymath}
\end{proof}
\begin{remark}
\normalfont  Note that $\delta_0$ is independent of $\epsilon$ because
  $F(\varphi_\delta,\omega_\delta,\delta)$ is independent of $\epsilon$ as well.
\end{remark}

Next, we show
\begin{theorem}
\label{thm:stationary1}
Let $\delta=\|J\|\epsilon$. There exists a $\delta_0>0$ such that \eqref{eq:3}
possesses a unique solution $(u,\phi)$ satisfying $\|u-u_0\|_{1,2}<\delta\epsilon$
for all $0<\epsilon\leq1$ and $0<\delta<\delta_0$. Furthermore, there exists a
constant $C=C(\Omega,\sigma)$, independent of both $\epsilon$ and $\delta$ such that
\begin{equation}
    \label{eq:14}
\|\rho-\rho_0\|_2 + \frac{1}{\|J\|}(\|\chi-\chi_0\|_2 + \|\phi-\phi_0\|_2) \leq
C \delta^2\epsilon^2 \,,
  \end{equation}
and
\begin{equation}
  \label{eq:15}
\|\rho-\rho_0\|_\infty + \|\rho-\rho_0\|_{1,2} + \epsilon\|\rho-\rho_0\|_{2,2}+ \frac{1}{\|J\|}(\|\chi-\chi_0\|_{2,2} +
\|\phi-\phi_0\|_{2,2}) \leq  C\delta^2\epsilon\,.
\end{equation}
Here $u=\rho e^{i\chi}.$
\end{theorem}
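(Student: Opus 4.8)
The plan is to produce the solution of \eqref{eq:3} as a small perturbation of the explicit approximation $(\rho_0,\chi_0,\phi_0)$ supplied by \rprop{prop:implicit}, via a contraction mapping, and then to read off \eqref{eq:14}--\eqref{eq:15} directly from the fixed--point estimates. Since $\rho_0\ge 1-C\delta^2\ge\tfrac12$, the change of variables $u\leftrightarrow(\rho,\chi)$ is harmless, so I would work with the polar system \eqref{eq:4}. Set $\rho=\rho_0+s$, $\chi=\chi_0+\psi$, $\phi=\phi_0+\eta$, with $(\psi)_\Omega=0$, $\partial\psi/\partial\nu=\partial\eta/\partial\nu=0$, and $\partial s/\partial\nu=-\partial\rho_0/\partial\nu$ on $\partial\Omega$ (a boundary datum of size $O(\delta^2)$ that causes no difficulty). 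Subtracting \eqref{eq:5} from \eqref{eq:4} and, in the amplitude equation, using $\epsilon^2|\nabla\chi_0|^2=1-\rho_0^2$ to merge the two $O(\epsilon^{-2})$ contributions, one obtains the equivalent system
\begin{align*}
&-\Delta s+\frac{2\rho_0^2}{\epsilon^2}\,s=\Delta\rho_0-\rho\bigl(2\nabla\chi_0\!\cdot\!\nabla\psi+|\nabla\psi|^2\bigr)-\frac{3\rho_0 s^2+s^3}{\epsilon^2}=:G_1,\\
&-\sigma\Delta\eta+\rho^2\eta=-(2\rho_0 s+s^2)\phi_0=:G_2,\\
&\Div\bigl(\rho^2\nabla\psi\bigr)=\rho^2\eta+(2\rho_0 s+s^2)\phi_0-\Div\bigl((2\rho_0 s+s^2)\nabla\chi_0\bigr)=:G_3,
\end{align*}
with $\rho=\rho_0+s$ throughout; the solvability condition $\int_\Omega G_3=0$ follows from \eqref{eq:41} and its analogue $\int_\Omega\rho_0^2\phi_0=0$ for the approximate fields.

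The point of this rearrangement is that the three left--hand operators, $-\Delta+2\rho_0^2\epsilon^{-2}$ (Neumann), $-\sigma\Delta+\rho^2$ (Neumann) and $\Div(\rho^2\nabla\cdot)$ (Neumann, mean zero), are coercive with constants independent of $\epsilon$ and $\delta$. For the first, testing $-\Delta v+2\rho_0^2\epsilon^{-2}v=g$ against $v$ gives $\|v\|_2\le C\epsilon^2\|g\|_2$ and $\|\nabla v\|_2\le C\epsilon\|g\|_2$, whereupon the equation itself yields $\|v\|_{2,2}\le C\|g\|_2$ (the $\epsilon^{-2}$ has cancelled) and the $2$-D interpolation inequality gives $\|v\|_\infty\le C\|v\|_2^{1/2}\|v\|_{2,2}^{1/2}\le C\epsilon\|g\|_2$; the other two operators obey the usual $L^2\to H^2$ bounds, together with the sharpened energy estimate $\|\nabla\psi\|_2\le C\|G_3\|_{H^{-1}}$. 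Define $\mathcal T:(s,\psi,\eta)\mapsto(s',\psi',\eta')$ by solving the three equations above with right--hand sides frozen at $(s,\psi,\eta)$, on the ball
\[
B=\Bigl\{\,\|s\|_2\le M\epsilon^2\delta^2,\ \ \|s\|_{1,2}+\epsilon\|s\|_{2,2}\le M\epsilon\delta^2,\ \ \|\nabla\psi\|_2+\|\eta\|_{2,2}\le M\|J\|\epsilon^2\delta^2,\ \ \|\psi\|_{2,2}\le M\|J\|\epsilon\delta^2\,\Bigr\},
\]
which is tailored so that its defining constraints are (up to the Poincar\'e inequality for $\|\psi\|_2$ and interpolation for $\|s\|_\infty$) precisely \eqref{eq:14}--\eqref{eq:15}.

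Using \rprop{prop:implicit} --- in particular \eqref{eq:6}--\eqref{eq:7}, which give $\|\Delta\rho_0\|_2\le\|1-\rho_0\|_{2,2}\le C\delta^2$, $\|\chi_0\|_{3,2}+\|\phi_0\|_{3,2}\le C\|J\|$ and hence $\|\nabla\chi_0\|_\infty+\|\phi_0\|_\infty\le C\|J\|$ --- one checks on $B$ that
\[
\|G_1\|_2\le C\delta^2,\qquad \|G_2\|_2\le C\|J\|\epsilon^2\delta^2,\qquad \|G_3\|_2\le C\|J\|\epsilon\delta^2,
\]
every coupling or superlinear term beating the displayed bound by a positive power of $\delta$; the $\epsilon^{-2}$--cancellation is essential here, e.g.\ $\epsilon^{-2}\|3\rho_0 s^2+s^3\|_2\le C\epsilon^{-2}\|s\|_\infty\|s\|_2\le C\epsilon\delta^4$, and $\|\rho\,\nabla\chi_0\!\cdot\!\nabla\psi\|_2\le C\|\nabla\chi_0\|_\infty\|\nabla\psi\|_2\le C\delta^4$ once the sharp bound $\|\nabla\psi\|_2\le C\|J\|\epsilon^2\delta^2$ on the iterate is used. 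Feeding these into the linear estimates of the previous paragraph returns $(s',\psi',\eta')\in B$ for an absolute choice of $M$ and $\delta$ small; the difference $\mathcal T(s_1,\psi_1,\eta_1)-\mathcal T(s_2,\psi_2,\eta_2)$ is treated the same way, every term in $G_i(s_1,\dots)-G_i(s_2,\dots)$ carrying an $O(\delta)$ factor on $B$, so $\mathcal T$ is a contraction once $\delta\le\delta_0$. Its fixed point gives $(u,\phi)=(\rho e^{i\chi},\phi)\in H^2(\Omega,\C)\times H^2(\Omega,\R)$ solving \eqref{eq:3}; membership in $B$ is \eqref{eq:14}--\eqref{eq:15}, and $\|u-u_0\|_{1,2}\le C\bigl(\|s\|_{1,2}+\|\psi\|_{1,2}+\|\nabla\chi_0\|_\infty\|\psi\|_2\bigr)\le C\epsilon\delta^2<\delta\epsilon$ for $\delta$ small.

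Finally, the contraction yields uniqueness only inside $B$, whereas the theorem asserts it among all solutions with $\|u-u_0\|_{1,2}<\delta\epsilon$; this is the step I expect to require the most care. To bridge the gap I would show a priori that any such solution already lies in $B$: elliptic bootstrapping in \eqref{eq:3} (using $J\in H^{3/2}$ and the $2$-D embeddings $H^1\hookrightarrow L^p$, $p<\infty$) places $u$ in $H^2\hookrightarrow C^0$, which legitimizes the polar form; feeding the $H^1$--closeness back into (\ref{eq:4}a) together with the maximum principle gives $\|\,1-|u|\,\|_\infty\le C\delta^2$; and then $(s,\psi,\eta)$ satisfies the system above, where the absorption goes through because the only genuinely singular term, $2\rho_0^2 s/\epsilon^2$, is coercive and of fixed sign. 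Propagating an $H^1$--level hypothesis down to the $\epsilon$--weighted $H^2$ estimates, uniformly in $\epsilon$, is the main obstacle; the remainder is routine perturbation theory. Alternatively, one may argue uniqueness directly by testing the difference of the $u$--equations against $v=u_1-u_2$: the $\phi$--difference equation is coercive, the amplitude part of $v$ is dominated by the favourable term $-2\epsilon^{-2}\!\int_\Omega[\Re(\bar u_2 v)]^2$, and the phase part by the Dirichlet energy modulo its constant mode, which is pinned down by the normalization $(\chi)_\Omega=0$.
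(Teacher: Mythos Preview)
Your approach is correct and follows the same overall strategy as the paper: both prove existence and uniqueness by the Banach fixed--point theorem, perturbing around the approximate solution $(\rho_0,\chi_0,\phi_0)$ of \eqref{eq:5}, and both obtain \eqref{eq:14}--\eqref{eq:15} as the defining constraints of the ball in which the contraction is run.

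There are, however, some organizational differences worth noting. The paper normalizes $\chi-\chi_0$ and $\phi-\phi_0$ by $1/\|J\|$ and treats the linearization of \eqref{eq:4} as a \emph{coupled} system, establishing its well--posedness via a single bilinear form $B$ and the Lax--Milgram lemma (see \eqref{eq:19}--\eqref{eq:23}); the coercivity estimate \eqref{eq:23} then simultaneously produces the $\epsilon$--weighted bounds on all three components. You instead decouple, solving the three scalar problems in turn with frozen right--hand sides and reading off the $\epsilon$--weighted bounds from the individual operators $-\Delta+2\rho_0^2\epsilon^{-2}$, $-\sigma\Delta+\rho^2$, and $\Div(\rho^2\nabla\,\cdot\,)$. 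Your explicit use of the identity $\epsilon^2|\nabla\chi_0|^2=1-\rho_0^2$ to reduce the amplitude operator to $-\Delta+2\rho_0^2\epsilon^{-2}$ is cleaner than the paper's route, which keeps $-\Delta+\|J\|^2|\nabla\tilde\chi_0|^2+\epsilon^{-2}(3\rho_0^2-1)$ and only later invokes $\|(3\rho_0^2-1)-2\|_\infty\le C\delta^2$ (cf.\ \eqref{eq:22}). Your decoupled scheme requires the extra observation that the $H^{-1}$ norm of $G_3$ gains one power of $\epsilon$ over its $L^2$ norm (to recover the sharp bound $\|\nabla\psi\|_2\le C\|J\|\epsilon^2\delta^2$), which you state; the paper obtains the analogous gain directly from the coupled energy identity \eqref{eq:29}. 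Either route works.

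Your final paragraph on uniqueness is well placed: the paper's fixed--point argument also yields uniqueness only inside the ball $B(0,r)$ in the weighted $\Hg$--norm with $r=\delta\epsilon$, and does not separately argue that every solution with $\|u-u_0\|_{1,2}<\delta\epsilon$ must lie there. Your sketch of the bootstrap needed to close this gap is reasonable, and the concern you raise is not a defect of your argument relative to the paper's.
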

\begin{proof}
  We use Banach fixed point theorem in order to prove both existence
  and uniqueness. Recall the definition of $\tilde{\chi}_0$ and
  $\tilde{\phi}_0$ and set
  \begin{displaymath}
    \rho_1=\rho-\rho_0 \quad ; \quad \chi_1= \frac{1}{\|J\|}(\chi-\chi_0) \quad ; \quad  \phi_1=
    \frac{1}{\|J\|}(\phi-\phi_0)  \,.
  \end{displaymath}
It is easy to show that
  \eqref{eq:4} can be written in the equivalent form
  \[
  \left\{
  \begin{aligned}
    & \left(-\Delta+\|J\|^2|\nabla\tilde{\chi}_0|^2+\frac{1}{\epsilon^2}(3\rho_0^2-1)\right)\rho_1
      +2\|J\|^2\rho_0\nabla\tilde{\chi}_0\cdot\nabla\chi_1 = \Delta\rho_0 && \\ 
     &\qquad - \|J\|^2|\nabla\chi_1|^2\rho_0  -
      \|J\|^2\nabla\chi_1\cdot(2\nabla\tilde{\chi}_0 \nabla\chi_1)\rho_1-\frac{1}{\epsilon^2}(3\rho_0+\rho_1)\rho_1^2 & \text{in } & \Omega \\
    & -\Div(\rho_0^2\nabla\chi_1) - 2\Div \big(\rho_1\rho_0\nabla\tilde{\chi}_0\big)
      +\rho_0^2\phi_1 + 2\rho_0\rho_1\tilde{\phi}_0
      = 
      -\rho_1(2\rho_0+\rho_1)\phi_1 & & \\  
      & \qquad - \rho_1^2\tilde{\phi}_0 +\Div
      \big(\rho_1^2\nabla\tilde{\chi}_0\big)  + \Div \big(\rho_1(2\rho_0+\rho_1)\nabla\chi_1\big) & \text{in } & \Omega \\
     & -\sigma\Delta\phi_1 +\rho_0^2\phi_1 + 2\rho_0\tilde{\phi}_0\rho_1= -\rho_1(2\rho_0+\rho_1)\phi_1 - \rho_1^2\tilde{\phi}_0   & \text{in } & \Omega  \\
      & \frac{\partial\rho_1}{\partial\nu} =    \frac{\partial\chi_1}{\partial\nu} =
      \frac{\partial\phi_1}{\partial\nu}=0   & \text{on } & \partial\Omega     
    \end{aligned}
    \right.
    \]
We define the space
\begin{equation}
\label{eq:17}
  \Hg = \left\{ (\eta,\omega,\varphi)\in H^2(\Omega,\R^3)\,\big| \,
  (\nabla\eta,\nabla\omega,\nabla\varphi)\cdot\nu\big|_{\partial\Omega}=0 \,;\, (\omega)_\Omega=0
   \right\} ,
\end{equation}
and let $(\eta,\omega,\varphi)\in\Hg$ be the weak solution of the following
boundary value problem
\begin{subequations}
\label{eq:18}
\begin{empheq}[left={\empheqlbrace}]{alignat=2}
   &   -\left(\Delta-\|J\|^2|\nabla\tilde{\chi}_0|^2-\frac{1}{\epsilon^2}(3\rho_0^2-1)\right)\eta
    + 2\|J\|^2\rho_0\nabla\tilde{\chi}_0\cdot\nabla\omega  = f_1 & \quad
     \text{in } \Omega\;\, \\
&-\Div(\rho_0^2\nabla\omega) - 2\Div \big(\eta\rho_0\nabla\tilde{\chi}_0\big) +\rho_0^2\varphi +
2\rho_0\eta\tilde{\phi}_0= f_2
& \quad
     \text{in } \Omega\;\, \\
&   -\sigma\Delta\varphi +\rho_0^2\varphi + 2\rho_0\tilde{\phi}_0\eta = f_3 & \quad
     \text{in } \Omega \,,
\end{empheq} 
\end{subequations}
where $(f_1,f_2,f_3)\in L^2(\Omega,\R^3)$. 

{\em Step 1:} Prove that $v=(\eta,\omega,\varphi)$ is well defined. To this end we
use the Lax-Milgram lemma. Let
$w=(\tilde{\eta},\tilde{\omega},\tilde{\varphi})$. Define the bilinear form $B:\Hg\times\Hg\to\R$
\begin{multline}
\label{eq:19}
  B[v,w] = \langle\nabla\eta,\nabla\tilde{\eta}\rangle +
  \Big\langle\Big(\|J\|^2|\nabla\tilde{\chi}_0|^2+\frac{1}{\epsilon^2}(3\rho_0^2-1)\Big)\eta,\tilde{\eta}\Big\rangle \\ + 
  \|J\|^2\Big[2\langle\rho_0\nabla\omega,\tilde{\eta}\nabla\tilde{\chi}_0\rangle +
  \langle\rho_0\nabla\omega,\rho_0\nabla\tilde{\omega}\rangle+2\langle\rho_0\nabla\tilde{\omega},\eta\nabla\tilde{\chi}_0\rangle \\ +
  \langle\rho_0\tilde{\omega},(\rho_0\varphi+2\eta\tilde{\phi}_0)\rangle + C_0\big( \langle\sigma\nabla\varphi,\nabla\tilde{\varphi}\rangle +
  \langle\rho_0\tilde{\varphi},(\rho_0\varphi+2\eta\tilde{\phi}_0)\rangle\big)\Big]\,,
\end{multline}
where the (positive) value of $C_0$ will be determined later.
Since by \eqref{eq:6} both $\tilde{\chi}_0$ and $\tilde{\phi}_0$ are in
$H^3(\Omega)$, it readily follows from Sobolev embeddings
that there exists $C(\Omega,\epsilon)$ such that
\begin{displaymath}
  | B[v,w]| \leq C \|v\|_{1,2}\|w\|_{1,2} \,.
\end{displaymath}
To use the Lax-Milgram Lemma we need yet to consider the quadratic form
$B[v,v]$. Note that
\begin{equation*}
  B[v,v]=I_1+I_2+I_3
\end{equation*}
 where
 \begin{equation}
   \label{eq:20}
\begin{aligned}
   I_1&=\|\nabla\eta\|^2_2+\frac{1}{\epsilon^2}\|(3\rho_0^2-1)^{1/2}\eta\|_2^2\,,\\
 I_2&=\|J\|^2\left\{
 \|\rho_0\nabla \omega\|_2^2+4\langle\rho_0\eta\nabla\omega,\nabla\tilde{\chi}_0\rangle+\langle\rho_0\omega,(\rho_0\varphi+2\eta\tilde\phi_0)\rangle\right\}\,,\\
I_3&=\|J\|^2C_0\left\{ \sigma\|\nabla\varphi\|^2_2+\langle\rho_0\varphi,(\rho_0\varphi+2\eta\tilde\phi_0)\rangle\right\}\,.
 \end{aligned}
 \end{equation}

With the aid of the Cauchy-Schwarz inequality we obtain for any $\delta_1>0$,
\begin{multline}
  I_2\geq \|J\|^2\Big\{\frac{1}{2}
  \|\rho_0\nabla\omega\|^2_2-8\|\nabla\tilde{\chi}_0\|_\infty^2\|\eta\|^2_2-
 2\delta_1\|\rho_0\|_\infty^2\|\omega\|_2^2\\-(\frac{\|\rho_0\|_\infty^2}{4\delta_1})\|\varphi\|_2^2-
\frac{\|\tilde{\phi}_0\|_\infty^2}{\delta_1} \|\eta\|_2^2\Big\}\,.
\end{multline}
Applying Cauchy-Schwarz inequality to $I_3$ yields,
\begin{equation}
  \label{eq:21}
  I_3\geq \|J\|^2C_0\left\{\sigma\|\nabla\varphi\|_2^2+\frac{1}{2}\|\rho_0\varphi\|_2^2-2\|\tilde\phi_0\|_\infty^2\|\eta\|_2^2\right\}\,.
\end{equation}
By \eqref{eq:7} and Sobolev embeddings we have that
\begin{equation}
  \label{eq:22}
\|1-\rho_0\|_\infty\leq C\delta^2 ~\Longrightarrow~ \|(3\rho_0^2-1)-2\|_\infty\leq C\delta^2\,.
\end{equation}
Furthermore, by  the Poincare's inequality (recall that $(\omega)_\Omega=0$), there exists
a $\lambda=\lambda(\Omega)>0$ such that
\begin{displaymath}
  \|\nabla\omega\|_2^2 \geq \lambda\|\omega\|_2^2\,.
\end{displaymath}
Using \eqref{eq:20}--\eqref{eq:21}, \eqref{eq:22} and setting
$\delta_1=\lambda/8$ and $C_0=2/\delta_1$, we deduce the existence of $C(\Omega)$ such
that, for a sufficiently small $\delta$ the inequality 
\begin{equation}
\label{eq:23}
   | B[v,v]| \geq C\frac{\delta^2}{\epsilon^2}(\|\omega\|_{1,2}^2+\|\varphi\|_{1,2}^2) +
   \frac{1}{\epsilon^2}\|\eta\|_2^2 + \|\nabla\eta\|_2^2\,
\end{equation}
holds. We can thus conclude the existence of a unique $v\in\Hg$ such that
\begin{displaymath}
  B[v,w]=\langle F,w\rangle \quad \forall w\in\Hg \,,
\end{displaymath}
where  $F=(f_1,\|J\|^2f_2,C_0\|J\|^2f_3)$. 

Let $(\rho-\rho_0,\chi-\tilde{\chi}_0,\phi-\tilde{\phi}_0)\in\Hg$. We set
\begin{subequations}
\label{eq:24}
  \begin{align}
  & f_1 = \Delta\rho_0  - \|J\|^2|\nabla\chi_1|^2\rho_0  -
      \|J^2\|\nabla\chi_1\cdot(2\nabla\tilde{\chi}_0+ \nabla\chi_1)\rho_1-\frac{1}{\epsilon^2}(3\rho_0+\rho_1)\rho_1^2\\
  & f_2 = -\rho_1(2\rho_0+\rho_1)\phi_1 - \rho_1^2\tilde{\phi}_0 + \Div
      \big(\rho_1^2\nabla\tilde{\chi}_0\big)  + \Div \big(\rho_1(2\rho_0+\rho_1)\nabla\chi_1\big) \\
& f_3 = -\rho_1(2\rho_0+\rho_1)\phi_1 - \rho_1^2\tilde{\phi}_0        \,.
\end{align}
\end{subequations}
Substituting the above into \eqref{eq:18}, we can define the operator
$\A:\Hg\to \Hg$
\begin{displaymath}
  \A(\rho_1,\chi_1,\phi_1)=(\eta,\omega,\varphi)\,.
\end{displaymath}
We look for a fixed point of $\A$. 
 We equip $\Hg$ with the norm
 \begin{displaymath}
 \|w\|_\Hg = \|\eta\|_{1,2}+ \|\eta\|_\infty+\epsilon\|D^2\eta\|_2  + \|\omega\|_{2,2} +
 \|\varphi\|_{2,2} \,.
 \end{displaymath}

{\em Step 2:} Let $v=(\rho_1,\chi_1,\phi_1)$.  We prove that 
for sufficiently small $\epsilon$ and $\delta$
there exist $C(\Omega,\sigma)$ and $r(\epsilon,\delta)\leq C\delta\epsilon$ for which
\begin{equation}
\label{eq:25}
  v\in B(0,r) \Rightarrow A(v)\in B(0,r) \,.
\end{equation}
Let then $0<r\leq\epsilon$ and $v\in B(0,r)$. We begin by deriving a bound
on $\|\eta\|_2$ and $\|A(v)\|_{1,2}$. By (\ref{eq:24}a), \eqref{eq:6},
and \eqref{eq:22} we
have that
\begin{displaymath}
  \|f_1\|_2^2 \leq  C\Big\{\|\Delta\rho_0\|_2^2 + \|J\|^4\big[\|\nabla\chi_1\|_4^4+
  \|\rho_1\|_4^2\big(\|\nabla\chi_1\|_4^2+\|\nabla\chi_1\|_8^4\big)\big] +
  \frac{1}{\epsilon^4}\big(  \|\rho_1\|_4^4 + \|\rho_1\|_6^6 \big)\Big\} \,.
\end{displaymath}
Recall that, by \eqref{eq:7},
there exists $C(\Omega,\sigma)>0$, such that for a sufficiently small $\delta>0$ we
have 
\begin{displaymath}
  \|\Delta\rho_0\|_2 \leq C\delta^2 \,.
\end{displaymath}
Sobolev embeddings then yield
\begin{equation}
\label{eq:26}
    \|f_1\|_2 \leq C\Big( \delta^2 + \frac{r^2}{\epsilon^2}\Big)\,.
\end{equation}
Similarly, we obtain that
\begin{multline*}
|\langle\omega,f_2\rangle + \langle\varphi,f_3\rangle| \leq
C\Big\{\big[(\|\rho_1\|_4 +
\|\rho_1\|_8^2)\|\phi_1\|_4+
\|\rho_1\|_4^2\big](\|\omega\|_2+\|\varphi\|_2) \\ +
 \big[ \|\rho_1\|_4^2+(\|\rho_1\|_4 +
\|\rho_1\|_8^2)\|\nabla\chi_1\|_4\big] \|\nabla\omega\|_2 \Big\} \,,
\end{multline*}
and hence,
\begin{displaymath}
 |\langle\omega,f_2\rangle + \langle\varphi,f_3\rangle| \leq Cr^2(\|\omega\|_{1,2}+\|\varphi\|_2) \,. 
\end{displaymath}
Combining the above  with \eqref{eq:26} yields
\begin{equation}
  \label{eq:27}
|\langle\A(v),F\rangle| \leq C\Big[\Big( \delta^2 + \frac{r^2}{\epsilon^2}\Big)\|\eta\|_2 +
\frac{\delta^2r^2}{\epsilon^2}(\|\omega\|_{1,2}+\|\varphi\|_2)\Big]  \,. 
\end{equation}

As $B(\A(v),\A(v))=\langle\A(v),F\rangle$ we obtain by \eqref{eq:23} that
\begin{equation}
\label{eq:28}
  \|\eta\|_2 \leq C\left[\delta^2\epsilon^2+r^2\right] \,.
\end{equation}
Upon multiplying (\ref{eq:18}b) by $\omega$ and  (\ref{eq:18}c) by $\varphi$ we
sum the resulting equations and integrate over $\Omega$ to obtain
\begin{displaymath}
  \|\nabla\omega\|_2^2 + \sigma\|\nabla\varphi\|_2^2 + \|\varphi\|_2^2 \leq C\big[r^2(\|\omega\|_{1,2} +
  \|\varphi\|_2) + \|\eta\|_2(\|\omega\|_{1,2} + \|\varphi\|_2)\big]\,.
\end{displaymath}
Using Poincare's inequality we then obtain, with the aid of
\eqref{eq:28}, that
\begin{equation}
\label{eq:29}
  \|\omega\|_{1,2} + \|\varphi\|_{1,2} \leq C(r^2 + \delta^2\epsilon^2) \,.
\end{equation}
Substituting the above, together with \eqref{eq:28} into
\eqref{eq:27} and using \eqref{eq:23} yields
\begin{equation}
  \label{eq:30}
\|\nabla\eta\|_2\leq \frac{C}{\epsilon} (r^2 + \delta^2\epsilon^2) \,.
\end{equation}

To complete the proof of \eqref{eq:25} we rewrite first (\ref{eq:18}b) in
the form
\begin{displaymath}
  -\Div(\rho_0^2\nabla\omega) = 2\Div \big(\eta\rho_0\nabla\tilde{\chi}_0\big) -\rho_0^2\varphi -
2\rho_0\eta\tilde{\phi}_0+ f_2\,.
\end{displaymath}
We attempt to estimate the $L_2$-norm of the right-hand side. Clearly
\begin{equation}
\label{eq:31}
  \|2\Div \big(\eta\rho_0\nabla\tilde{\chi}_0\big)-\rho_0^2\varphi -
2\rho_0\eta\tilde{\phi}_0\|_2 \leq C (\|\eta\|_{1,2}+ \|\varphi\|_2)\leq \frac{C}{\epsilon} (r^2 + \delta^2\epsilon^2)\,.
\end{equation}
Furthermore, we have
\begin{equation}
  \label{eq:32}
\|\rho_1(2\rho_0+\rho_1)\phi_1 + \rho_1^2\tilde{\phi}_0\|_2 \leq C\big[(\|\rho_1\|_4 +
\|\rho_1\|_{8}^2)\|\phi_1\|_4 + \|\rho_1\|_{4}^2\big] \leq Cr^2
\end{equation}
and
\begin{multline*}
\|\Div\big(\rho_1^2\nabla\tilde{\chi}_0\big)\|_2 + \|\Div
\big(\rho_1(2\rho_0+\rho_1)\nabla\chi_1\big)\|_2\leq \\C\big(\|\rho_1\|_4^2 +
  \|\nabla\rho_1\|_2\|\rho_1\|_\infty+\|\rho_1\|_\infty\|\chi_1\|_{2,2}+\|\nabla\rho_1\|_2\|\nabla\chi_1\|_2 \big)\leq Cr^2 \,.
\end{multline*}
Combining the above with \eqref{eq:32} and \eqref{eq:31} yields for
$\epsilon\leq1$,
\begin{equation}
\label{eq:33}
  \|\Div(\rho_0^2\nabla\omega)\|_2 \leq \frac{C}{\epsilon} (r^2 + \delta^2\epsilon^2) \,.
\end{equation}
Writing
$\Div(\rho_0^2\nabla\omega)=\rho_0^2\Delta\omega+2\rho_0\nabla\rho_0\cdot\nabla\omega$,
using \eqref{eq:7} and standard elliptic estimates, we obtain from
\eqref{eq:33}  that
\begin{equation}
  \label{eq:34}
\|\omega\|_{2,2} \leq  \frac{C}{\epsilon} (r^2 + \delta^2\epsilon^2) \,.
\end{equation}
In a similar manner it is possible to show that
\begin{equation}
  \label{eq:35}
\|\varphi\|_{2,2} \leq  C(r^2 + \delta^2\epsilon^2) 
\end{equation}

To complete the proof we need yet to bound $\epsilon\|D^2\eta\|_2$
and $\|\eta\|_\infty$. To this end we
rewrite (\ref{eq:18}a) in the form
\begin{displaymath}
  -\Delta\eta = -\Big(\|J\|^2|\nabla\tilde{\chi}_0|^2-\frac{1}{\epsilon^2}(3\rho_0^2-1)\Big)\eta
    - 2\|J\|^2\rho_0\nabla\tilde{\chi}_0\cdot\nabla\omega  + f_1 \,.
\end{displaymath}
It easily follows that for sufficiently small $\delta$ we have
\begin{equation}
\label{eq:36}
  \Big\|
  \Big(\|J\|^2|\nabla\tilde{\chi}_0|^2-\frac{1}{\epsilon^2}(3\rho_0^2-1)\Big)\eta
  \Big\|_2 \leq  \frac{C}{\epsilon^2} (r^2 + \delta^2\epsilon^2) \,.
\end{equation}
Furthermore, as
\begin{displaymath}
  \|J\|^2\|\rho_0\nabla\tilde{\chi}_0\cdot\nabla\omega\|_2 \leq C\frac{\delta^2}{\epsilon^2}r^2 \,,
\end{displaymath}
we obtain with the aid of \eqref{eq:36} and \eqref{eq:26} that
\begin{equation}
\label{eq:37}
  \|\eta\|_{2,2} \leq  \frac{C}{\epsilon^2} (r^2 + \delta^2\epsilon^2) \,.
\end{equation}
It follows from Agmon's inequality (cf. \cite[Lemma 13.2]{ag65}) in
conjunction with \eqref{eq:28} that
\begin{equation}
\label{eq:38}
  \|\eta\|_\infty \leq C\|\eta\|_2^{1/2}\|\eta\|_{2,2}^{1/2}\leq \frac{C}{\epsilon} (r^2 + \delta^2\epsilon^2) \,.
\end{equation}
Combining the above with \eqref{eq:28}, \eqref{eq:30}, \eqref{eq:34},
\eqref{eq:37}, and \eqref{eq:35} yields
\begin{displaymath}
  \|\A(v)\|_\Hg \leq \frac{C}{\epsilon} (r^2 + \delta^2\epsilon^2) \,.
\end{displaymath}
We may thus choose $r=\delta\epsilon$ to obtain, for
a sufficiently small value of $\delta$, that
\begin{displaymath}
   \|\A(v)\|_\Hg\leq C\epsilon\delta^2 <r \,.
\end{displaymath}

{\em Step 3:} Let $(v_1,v_2)\in B(0,r)^2$. We prove that there exists
$\gamma<1$ such that
\begin{equation}
  \label{eq:39}
\|\A(v_1)-\A(v_2)\|_\Hg \leq \gamma\|v_1-v_2\|_\Hg \,.
\end{equation}
It can be easily verified that
\begin{align*}
  & \|f_1(v_1)-f_1(v_2)\|_2 \leq C
  \Big(\|J\|^2+\frac{1}{\epsilon^2}\Big)r\|v_1-v_2\|_\Hg \,, \\
  & \|f_2(v_1)-f_2(v_2)\|_2 \leq Cr\|v_1-v_2\|_\Hg \,,\\
& \|f_2(v_1)-f_2(v_2)\|_2 \leq  Cr\|v_1-v_2\|_\Hg \,.
\end{align*}
Let now $\A(v_1)=(\eta_1,\omega_1,\varphi_1)$ and $\A(v_2)=(\eta_2,\omega_2,\varphi_2)$. As
\begin{displaymath}
  B(\A(v_1)-\A(v_2),\A(v_1)-\A(v_2))=\langle\A(v_1)-\A(v_2),F(v_1)-F(v_2)\rangle \,, 
\end{displaymath}
we obtain  by \eqref{eq:23} that
\begin{displaymath}
  \|\eta_1-\eta_2\|_2 \leq Cr\|v_1-v_2\|_\Hg \,.
\end{displaymath}
The same procedure that led to \eqref{eq:29} and \eqref{eq:30} enables us to conclude that
\begin{displaymath}
  \|\omega_1-\omega_2\|_{1,2} + \|\varphi_1-\varphi_2\|_{1,2} \leq  Cr\|v_1-v_2\|_\Hg \,,
\end{displaymath}
and that
\begin{displaymath}
    \|\nabla(\eta_1-\eta_2)\|_2 \leq C\frac{r}{\epsilon}\|v_1-v_2\|_\Hg \,.
\end{displaymath}
We then proceed in precisely the same manner as in the derivation of
\eqref{eq:34} and \eqref{eq:35} to obtain that 
\begin{displaymath}
  \epsilon\|\omega_1-\omega_2\|_{2,2} +  \|\varphi_1-\varphi_2\|_{2,2} \leq  Cr\|v_1-v_2\|_\Hg \,.
\end{displaymath}
Finally, using the same procedure as in the derivation of
\eqref{eq:38} and \eqref{eq:37} we obtain that
\begin{displaymath}
      \|\eta_1-\eta_2\|_\infty +\epsilon^2\|\eta_1-\eta_2\|_{2,2} \leq C\frac{r}{\epsilon}\|v_1-v_2\|_\Hg \,.
\end{displaymath}
Combining all of the above then yields
\begin{displaymath}
  \|\A(v_1)-\A(v_2)\|_\Hg\leq C\frac{r}{\epsilon}\|v_1-v_2\|_\Hg\,,
\end{displaymath}
and since $r=\delta\epsilon$, we obtain \eqref{eq:39} for a sufficiently small
value of $\delta$. 
\end{proof}

\section{Linear stability}
\label{sec:satble}
In what follows, we examine the linear stability of the solution we
have obtained in the previous section. To this end, let
\begin{displaymath}
  \Ug= \{ u\in H^2(\Omega,\C) \, :\, \partial u/\partial\nu|_{\partial\Omega}=0 \,\} 
\end{displaymath}
and define the
non-linear operator $\LL_\epsilon:\Ug\to L^2(\Omega,\C)$ by
\begin{displaymath}
  \LL_\epsilon  u = -\Delta u +i\phi u - \frac{u}{\epsilon^2}(1-|u|^2)\,.
\end{displaymath}
for any $u\in \Ug$. In the above $\phi$ denotes a non-local, non-linear operator of $u$.
We define $\phi$, in view of \eqref{eq:2},  as the solution of
\begin{displaymath}
  \begin{dcases}
    \sigma\Delta\phi =\Div (\Im\{\bar{u}\nabla u\})  & \text{in } \Omega \\
\frac{\partial\phi}{\partial\nu} = -\frac{J}{\sigma}  & \text{on
} \partial\Omega \\
\left(|u|^2\phi\right)_\Omega=0\,. &
  \end{dcases}
\end{displaymath}
The system \eqref{eq:1} can then be written in the form
\begin{equation}
\label{eq:112}
  u_t+\LL_\epsilon u=0\,.
\end{equation}

We look for the spectrum of $\A=D\LL_\epsilon(u_s)$---the Fr\'echet derivative
of $\LL_\epsilon$ at $u_s$. Set $\phi_s=\phi(u_s)$. It can be readily verified that
\begin{equation}
\label{eq:113}
  \A u = -\Delta u +i(\phi_su+\hat{\varphi} u_s)-\frac{u}{\epsilon^2}(1-\rho_s^2) + \frac{2u_s}{\epsilon^2}
  \Re(\bar{u}_su) \,,
\end{equation}
where $\hat{\varphi}(u,u_s)$ is a non-local linear operator given by the solution of
\begin{equation}
\label{eq:114}
    \begin{dcases}
    \sigma\Delta\hat{\varphi}  =\Div (\Im\{\bar{u}_s\nabla u +\bar{u}\nabla u_s\})  & \text{in } \Omega \\
\frac{\partial\hat{\varphi}}{\partial\nu} = 0  & \text{on
} \partial\Omega \\
\left(|u_s|^2\hat{\varphi}+2\phi_s\Re(\bar{u}_su)]\right)_\Omega=0\,. &
  \end{dcases}
\end{equation}
Note that $\A$ has a non-trivial kernel, i.e., $\A\,iu_s=0$. This
non-trivial kernel reflects the fact that
\begin{displaymath}
  e^{-i\Theta}\LL_\epsilon e^{i\Theta}=\LL_\epsilon \,. 
\end{displaymath}
Let $\{u_n\}_{n=0}^\infty$ denote the system of eigenfunctions associated
with $\A$ where $u_0=iu_s$. By Theorem 16.5 in \cite{ag65} we
have ${\rm span }\{u_n\}_{n=0}^\infty=L^2(\Omega,\C)$ and we   can  thus set
\begin{displaymath}
  D(\A) = \Ug\cap\overline{{\rm span \,}\{u_n\}_{n=1}^\infty}
\end{displaymath}
as the domain of $\A$, thereby eliminating $iu_s$ from the domain. 

Let
\begin{equation*}
  \tilde{u}=\tilde{\rho}e^{i\tilde{\chi}}=(\rho_s+\delta^\prime\rho)e^{i(\chi_s+\delta^\prime\chi)}\in\Ug
\end{equation*}
 denote an infinitesimal
perturbation of $u_s$, where $\delta^\prime$ is a small parameter. Then, $\tilde{u}=u_s+\delta'u+o(\delta')$ where 
\begin{equation}
\label{eq:40}
    u=e^{i\chi_s}(\rho+i\rho_s\chi)\,.
\end{equation}
Consider then the linear operator
\begin{displaymath}
  \B=e^{-i\chi_s}\A e^{i\chi_s}\,,
\end{displaymath}
defined on $D(\B)=e^{-i\chi_s}D(\A)$. More explicitly, we have
\begin{displaymath}
  D(\B) = \{ v\in\Ug \,|\,
\end{displaymath}
Since $e^{i\chi_s}$ is a
unitary operator, we have $\sigma(\A)=\sigma(\B)$. We write any  $v\in D(\B)$ as
$v=e^{-i\chi_s}u$ with $u\in D(\A)$. Substituting into \eqref{eq:113} yields 
\begin{multline} 
\label{eq:115}
  \B v= -\Delta v + |\nabla\chi_s|^2v -\frac{1}{\epsilon^2}(1-\rho_s^2)v +
  \frac{2}{\epsilon^2}\rho_s^2\Re v \\+i\Big((-\Delta\chi_s)v-2\nabla\chi_s\cdot\nabla v +  \phi_sv + \rho_s\varphi(v)\Big) \,,
\end{multline}
where 
$\varphi=\varphi(v)$ is given (according to \eqref{eq:114}) by the solution of
\begin{equation}
\label{eq:116}
\begin{dcases}
    \sigma\Delta\varphi  =\Div \big(\Im(\rho_s\nabla v+{\bar v}\nabla\rho_s)+2\rho_s\nabla\chi_s\Re v\big)  & \text{in } \Omega \\
\frac{\partial\varphi}{\partial\nu} = 0  & \text{on
} \partial\Omega \\
\left(\rho_s^2\varphi+2\phi_s\rho_s\Re v\right)_\Omega=0\,. &
  \end{dcases}
\end{equation}
 In view of \eqref{eq:40} we have $v=\rho+i\rho_s\chi$.
Next we look for a non-trivial solution to the eigenvalue problem $\B
v=\lambda v$, i.e.,
\begin{subequations}
\label{eq:117}
  \begin{gather}
    \Re(\B v)=\Re(\lambda v) \\
    \Im(\B v)=\Im(\lambda v) \,.
  \end{gather}
\end{subequations}

We now prove the stability of the solution of \eqref{eq:3} in the
neighborhood of $(\rho_0e^{i\chi_0},\phi_0)$ where $(\rho_0,\chi_0,\phi_0)$ are
given by \eqref{eq:5}. We establish this for a sufficiently small
value of $\|J\|\epsilon$, which is precisely the limit where existence has been
obtained in the previous section.

\begin{proposition}
  \label{prop:stable1}
  Let $u_s$ denote a solution of \eqref{eq:1} in the neighborhood of
  $(\rho_0e^{i\chi_0},\phi_0)$ given by \eqref{eq:5}. Furthermore, let
  $\A=D\LL_\epsilon(u_s):D(\A)\to L^2(\Omega,\C)$. There exists $\delta_0>0$, such
  that for all $0<\epsilon\leq1$ and $0<\delta<\delta_0$ we have
  \begin{equation}
    \label{eq:121}
\min_{\lambda\in\sigma(\A)}\Re\lambda>0\,,
  \end{equation}
where $\delta=\|J\|\epsilon$.
\end{proposition}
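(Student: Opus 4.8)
The plan is to show that the spectrum of $\A$ (equivalently, of $\B=e^{-i\chi_s}\A e^{i\chi_s}$, acting on $D(\B)$, which excludes the direction $iu_s$) has positive real part by working directly with the eigenvalue equation $\B v=\lambda v$ written in the real/imaginary form \eqref{eq:117}. Write $v=\rho+i\rho_s\chi$ as in \eqref{eq:40}. The strategy is a perturbation argument off the decoupled limit $\delta=0$: when $\delta=0$ one has $\rho_s\equiv1$, $\chi_s\equiv0$, $\phi_s\equiv0$, and $\varphi\equiv0$, so $\B$ degenerates into the block-diagonal operator $v\mapsto(-\Delta+2/\epsilon^2)\rho \;+\; i(-\Delta)(\rho_s\chi)$ — that is, $-\Delta+2/\epsilon^2$ on the real part (strictly positive, with spectrum bounded below by $2/\epsilon^2$) and $-\Delta$ on the imaginary part, where the zero mode $\chi\equiv\mathrm{const}$ has been removed by passing to $D(\A)$. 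Hence at $\delta=0$ the spectrum is contained in $\{0\}\cup[2/\epsilon^2,\infty)$ with the zero eigenvalue removed, so $\min_{\lambda\in\sigma(\A)}\Re\lambda = \lambda_2(-\Delta)>0$, the first nonzero Neumann eigenvalue. The task is to propagate this positivity to small $\delta$.

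First I would make the quadratic-form estimate the centerpiece. Pair \eqref{eq:117} appropriately: test the real equation against $\rho$ and the imaginary equation against $\rho_s\chi$ (or against $\chi$ after dividing by $\rho_s$, which is bounded away from zero by \eqref{eq:7}), and also use the elliptic problem \eqref{eq:116} for $\varphi$ to control $\varphi$ in terms of $v$. From \eqref{eq:115} the real part contributes $\|\nabla\rho\|_2^2+\tfrac{2}{\epsilon^2}\|\rho_s\rho\|_2^2$ plus terms involving $\|\nabla\chi_s\|_\infty,\ \|\phi_s\|_\infty,\ \|1-\rho_s\|_\infty$, each of size $O(\delta^2)$ by Theorem \ref{thm:stationary1} (via \eqref{eq:14}--\eqref{eq:15} and Sobolev embedding), and the imaginary part contributes $\|\nabla\chi\|_2^2$ after absorbing the non-local $\rho_s\varphi$ term. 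The cross terms coupling $\rho$ and $\chi$ (namely $-2\nabla\chi_s\cdot\nabla v$, $(-\Delta\chi_s)v$, and the current-generated pieces) all carry a factor $O(\delta)$ or $O(\delta^2)$, so by Cauchy--Schwarz and Young's inequality they are dominated, for $\delta$ small, by the $O(1)$ coercive terms $\|\nabla\rho\|_2^2$, $\tfrac{1}{\epsilon^2}\|\rho\|_2^2$, and $\|\nabla\chi\|_2^2 \geq \lambda_2\|\chi\|_2^2$. This yields an inequality of the form $\Re\lambda\,\bigl(\|\rho\|_2^2+\|\rho_s\chi\|_2^2\bigr)\ \geq\ c\bigl(\|\nabla\rho\|_2^2 + \tfrac{1}{\epsilon^2}\|\rho\|_2^2 + \|\nabla\chi\|_2^2\bigr) - C\delta^2(\cdots)$, with $c,C$ depending only on $\Omega,\sigma$, from which $\Re\lambda \geq c'>0$ once $\delta<\delta_0$, because on $D(\A)$ the Poincaré inequality $\|\nabla\chi\|_2^2\geq\lambda_2\|\chi\|_2^2$ is available (the mean-zero / orthogonality to $iu_s$ condition is exactly what survives into $D(\B)$).

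The estimate above controls real parts of eigenvalues assuming they exist; I would also need to know the spectrum consists of isolated eigenvalues of finite multiplicity filling out a sequence with $\Re\lambda\to\infty$, so that $\min_{\lambda\in\sigma(\A)}\Re\lambda$ makes sense and the infimum is attained. This follows from the Agmon--Nirenberg elliptic theory already invoked in the excerpt (Theorems 16.5 and 13.2 in \cite{ag65}): $\A$ is a relatively compact perturbation of $-\Delta$ with compact resolvent, so its spectrum is discrete, the eigenfunctions $\{u_n\}$ are complete, and the real parts accumulate only at $+\infty$; combined with the a priori bound $\Re\lambda\geq c'$ this gives \eqref{eq:121}. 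I would take $\delta_1$ in Theorem \ref{thm:main} to be the $\delta_0$ produced here (intersected with the one from Theorem \ref{thm:stationary1}).

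The main obstacle I anticipate is the handling of the non-local term $\rho_s\varphi(v)$ in the imaginary part of \eqref{eq:115}: unlike the other perturbation terms it is not pointwise small, since $\varphi$ solves \eqref{eq:116} whose right-hand side contains $\Div(\Im(\rho_s\nabla v))$, a full-strength term. The key point to exploit is the compatibility/normalization condition $(\rho_s^2\varphi+2\phi_s\rho_s\Re v)_\Omega=0$ together with the structure of \eqref{eq:116}: testing \eqref{eq:116} against $\varphi$ and integrating by parts, the worst term $\int \Im(\rho_s\nabla v)\cdot\nabla\varphi$ can be re-expressed — since $v=\rho+i\rho_s\chi$ so $\Im(\rho_s\nabla v)=\rho_s\nabla(\rho_s\chi)$ modulo lower order — and one finds that $\sigma\|\nabla\varphi\|_2^2$ is balanced against $\|\nabla(\rho_s\chi)\|_2\|\nabla\varphi\|_2$, giving $\|\nabla\varphi\|_2\leq C\|\nabla\chi\|_2$ (plus $O(\delta)$ corrections). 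Then in the energy identity the term $\langle\rho_s\varphi,\rho_s\chi\rangle$ or rather $\langle\rho_s\varphi(v),\Im(\lambda v)/\cdots\rangle$ must be shown to have a sign or to be absorbable; the cleanest route, mirroring the Lax--Milgram computation in the proof of Theorem \ref{thm:stationary1} (the use of the parameter $C_0$ there), is to test the imaginary equation and the $\varphi$-equation against a suitable linear combination so that the cross term cancels, leaving a manifestly coercive quadratic form. Getting this cancellation right — tracking that $\varphi$ genuinely only couples through $\chi$ (and through the $O(\delta)$ quantities $\phi_s,\nabla\chi_s$), not through $\rho$ at leading order — is the crux of the argument.
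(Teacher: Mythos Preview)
Your outline matches the paper's approach: an energy identity obtained by testing (\ref{eq:122}a) against $\rho$ and (\ref{eq:122}b) against $\chi$, with the cross terms controlled by the smallness of $\|1-\rho_s\|_\infty$, $\|\nabla\chi_s\|_\infty$, $\|\phi_s\|_\infty$ (all $O(\delta^2)$ or $O(\delta/\epsilon)$). The paper also concludes discreteness of the spectrum from Agmon's theory, as you do.

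Where your proposal is incomplete is exactly where you flag it. Your first suggestion---bound $\|\nabla\varphi\|_2\le C\|\nabla\chi\|_2$ by testing \eqref{eq:116} with $\varphi$---does \emph{not} suffice: the resulting cross term $\langle\rho_s^2\varphi,\chi\rangle$ would then be bounded only by $C\|\nabla\chi\|_2\|\chi\|_2$, which is not absorbable into $\|\nabla\chi\|_2^2$ with a constant below $1$. The paper's device is different from the $C_0$-trick you invoke: one tests (\ref{eq:122}c) against $\sigma\varphi-\chi$, which gives
\[
\|\nabla(\sigma\varphi-\chi)\|_2 \le C\bigl(\delta^2\|\nabla\chi\|_2+\delta\epsilon^{-1}\|\rho\|_2+\delta^3\|\rho\|_{1,2}\bigr),
\]
i.e.\ $\sigma\varphi$ and $\chi$ agree in $\dot H^1$ up to $O(\delta)$. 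Then the dangerous term is rewritten as
\[
\langle\rho_s\chi,\rho_s\varphi+2\rho\phi_s\rangle
=\sigma\|\rho_s\varphi+2\rho\phi_s\|_2^2
-\langle\rho_s(\sigma\varphi-\chi),\rho_s\varphi+2\rho\phi_s\rangle
-2\sigma\langle\rho\phi_s,\rho_s\varphi+2\rho\phi_s\rangle,
\]
so it contributes a \emph{positive} term $\sigma\|\rho_s\varphi+2\rho\phi_s\|_2^2$ plus $O(\delta)$ remainders (here the normalization (\ref{eq:122}e) is used to apply Poincar\'e to $\sigma\varphi-\chi$). This is the cancellation you were looking for.

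One further point: you lean on a Poincar\'e inequality for $\chi$ on $D(\B)$ to get a strict lower bound $\Re\lambda\ge c'>0$. But $\B$ is not self-adjoint, so membership in $\overline{\mathrm{span}\{u_n\}_{n\ge1}}$ does not give $L^2$-orthogonality to $i\rho_s$, and hence no mean-zero condition on $\chi$ is available a priori. The paper sidesteps this: the energy inequality yields only $\lambda_r\ge0$, and then one observes that $\lambda_r=0$ forces $\rho=0$, $\nabla\chi=0$, i.e.\ $v=Ki\rho_s$, which lies in the excluded direction; hence $\lambda_r>0$ strictly, and discreteness gives \eqref{eq:121}.
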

\begin{proof}
Let $v=\rho+i\rho_s\chi \in D(\B)$ denote an eigenfunction of $\B$ associated
with the eigenvalue $\lambda$. By  \eqref{eq:115} and \eqref{eq:117}, the triplet $(\rho,\chi,\lambda)$ must satisfy
the following problem
\begin{subequations}
\label{eq:122}
\begin{empheq}[left={\empheqlbrace}]{alignat=2}
  &   -\Delta\rho + \rho|\nabla\chi_s|^2 + 2\rho_s\nabla\chi\cdot\nabla\chi_s -
  \frac{\rho}{\epsilon^2}(1-3\rho_s^2) = \lambda_r\rho-\lambda_i\rho_s\chi \quad &
     \text{in }& \Omega \\
   &  -\Div (\rho_s^2\nabla\chi) - 2\Div (\rho_s\rho\nabla\chi_s) + \rho_s^2\varphi + 2\rho_s\rho\phi_s
     = \lambda_r\rho_s^2\chi+\lambda_i\rho_s\rho \quad &   \text{in }& \Omega \\
&-\sigma\Delta\varphi +\Div (\rho_s^2\nabla\chi) + 2\Div (\rho_s\rho\nabla\chi_s)= 0 \quad  & \text{in
}& \Omega  \\
& \frac{\partial\rho}{\partial\nu}=\frac{\partial\chi}{\partial\nu} = \frac{\partial\varphi}{\partial\nu} = 0 \quad &  \text{on
}& \partial\Omega \\
&\left(\rho_s^2\varphi+2\phi_s\rho_s\rho\right)_\Omega=0 \,, &&
  \end{empheq}
\end{subequations}
where $\lambda_r=\Re\lambda$ and $\lambda_i=\Im\lambda$.  Note that to obtain
(\ref{eq:122}a) we need to use (\ref{eq:4}b). Since the spectrum of $\B$ is
discrete (cf. \cite{ag65} chapter 15), it suffices to show that all
critical values of $\lambda$, for which non-trivial solutions for the above
problem exist (excluding, of course, $u=iu_s$), lie in the right hand
side of $\C$.

Taking the inner product
in $L^2(\Omega)$ of (\ref{eq:122}c) with $\sigma\varphi-\chi$ we obtain 
\begin{displaymath}
  \|\nabla(\sigma\varphi-\chi)\|_2^2 = \langle(\rho_s^2-1)\nabla\chi,\nabla(\sigma\varphi-\chi)\rangle  +
  2\langle\rho_s\nabla(\sigma\varphi-\chi),\rho\nabla\chi_s\rangle  \,.
\end{displaymath}
It can be easily demonstrated using \eqref{eq:6}, \eqref{eq:15}, and
Sobolev embeddings that
\begin{equation}
  \label{eq:123}
\|\rho\nabla\chi_s\|_2 \leq \|\nabla\chi_0\|_\infty\|\rho\|_2+\|\nabla(\chi_s-\chi_0)\|_p\|\rho\|_q \leq
C\|J\|(\|\rho\|_2 + \delta^2\epsilon\|\rho\|_{1,2})\,.
\end{equation}
In the above $p>2$ and $q=2p/(p-2)$.
Consequently, using \eqref{eq:7}
and Sobolev embeddings and recalling that $\|J\|=\delta/\epsilon$, it readily follows that
\begin{equation}
  \label{eq:124}
\|\nabla(\sigma\varphi-\chi)\|_2 \leq C\big(\delta^2\|\nabla\chi\|_2+\delta(\|\rho\|_2/\epsilon + \delta^2\|\rho\|_{1,2})\big)  \,.
\end{equation}
Next, we multiply (\ref{eq:122}b) by $\chi$ and (\ref{eq:122}a) by $\rho$,
then integrate their sum by parts to obtain
\begin{multline}
\label{eq:46}
  \|\rho_s\nabla\chi\|_2^2 + \|\nabla\rho\|_2^2 + 4\langle\rho_s\nabla\chi,\rho\nabla\chi_s\rangle +
  \|\rho\nabla\chi_s\|_2^2 + \frac{2}{\epsilon^2}\|\rho\|_2^2\\+
  \langle\rho_s\chi,\rho_s\varphi+2\rho\phi_s\rangle \leq \frac{3}{\epsilon^2}\|1-\rho_s^2\|_\infty\|\rho\|_2^2
  +\lambda_r(\|\rho_s\chi\|_2^2+\|\rho\|_2^2) \,.
\end{multline}
We now write,
\begin{align*}
 \int_\Omega \rho_s\chi(\rho_s\varphi+2\rho\phi_s) =\int_\Omega \rho_s(\chi-\sigma\varphi) (\rho_s\varphi+2\rho\phi_s)+\int_\Omega
 \sigma\rho_s\varphi(\rho_s\varphi+2\rho\phi_s)\\=\int_\Omega \rho_s(\chi-\sigma\varphi)
 (\rho_s\varphi+2\rho\phi_s)+\sigma\|\rho_s\varphi+2\rho\phi_s\|_2^2-2\sigma\int_\Omega \rho\phi_s (\rho_s\varphi+2\rho\phi_s)\,,
\end{align*}
which together with (\ref{eq:122}e) yields
\begin{multline}
\label{eq:43}
   \int_\Omega \rho_s\chi(\rho_s\varphi+2\rho\phi_s)\geq \sigma\|\rho_s\varphi+2\rho\phi_s\|_2^2\\-
   \big(\|\rho_s[\sigma\phi-\chi-(\sigma\varphi-\chi)_\Omega]\|_2+2\sigma\|\rho\phi_s\|_2\big)\|\rho_s\varphi+2\rho\phi_s\|_2
\end{multline}
Applying Cauchy's inequality, $2ab\leq2\alpha a^2+\frac{2}{\alpha}b^2$, with
$\alpha=\frac{1}{4}$, leads to 
\begin{equation}
  \label{eq:44}
 4\langle\rho_s\nabla\chi,\rho\nabla\chi_s\rangle\geq -\frac{1}{2}\|\rho_s\nabla\chi\|_2^2-8\|\rho\nabla\chi_s\|_2^2\,.
\end{equation}
 Recall that by \eqref{eq:6} and \eqref{eq:14} we have that $\|\nabla\chi_s\|_2\leq
 C\delta/\epsilon$, and hence we obtain from \eqref{eq:44} that
 \begin{equation}
   \label{eq:45}
 4\langle\rho_s\nabla\chi,\rho\nabla\chi_s\rangle\geq -\frac{1}{2}\|\rho_s\nabla\chi\|_2^2-\frac{C\delta^2}{\epsilon^2}\,.
 \end{equation}
Furthermore, because $\|1-\rho_s\|_\infty\leq C\delta^2$ by \eqref{eq:7} and \eqref{eq:15},
we substitute \eqref{eq:43} and \eqref{eq:45} into \eqref{eq:46} to
obtain 
\begin{multline}
\label{eq:47}
    \|\nabla\rho\|_2^2 + \frac{1}{2}\|\nabla\chi\|_2^2 + \sigma\|\rho_s\varphi+2\rho\phi_s\|_2^2 +
    \frac{2}{\epsilon^2}\|\rho\|_2^2 \\ \leq C\delta^2\Big[\frac{1}{\epsilon^2}\|\rho\|_2^2
    +\|\nabla\rho\|_2^2 + \|\nabla\chi\|_2^2\Big] \\+
    \big(\|\rho_s[\sigma\phi-\chi-(\sigma\varphi-\chi)_\Omega]\|_2\\+2\sigma\|\phi_s\rho\|_2\big)\|\rho_s\varphi+2\rho\phi_s\|_2+\lambda_r(\|\rho_s\chi\|_2^2+\|\rho\|_2^2)\,.
\end{multline}
By \eqref{eq:6} and \eqref{eq:14} we have also $\|\phi_s\|_\infty\leq C\delta/\epsilon$, and thus,
from \eqref{eq:124} and Poincar\'e inequality we get 
\begin{multline}
  \label{eq:48}
  \big(\|\rho_s[\sigma\phi-\chi-(\sigma\varphi-\chi)_\Omega]\|_2+2\sigma\|\phi_s\rho\|_2\big)\|\rho_s\varphi+2\rho\phi_s\|_2\leq\\
  \frac{\sigma}{2}\|\rho_s\varphi+2\rho\phi_s\|_2^2+C\delta^2(\frac{1}{\epsilon^2}\|\rho\|_2^2+\|\nabla\chi\|_2^2+\delta^2\|\nabla\rho\|_2^2)\,. 
\end{multline}
Finally, by \eqref{eq:47} and \eqref{eq:48} we have that
\begin{multline}
\label{eq:125}
  \|\nabla\rho\|_2^2 + \frac{1}{2}\|\nabla\chi\|_2^2 +\frac{\sigma}{2}\|\rho_s\varphi+2\rho\phi_s\|_2^2  +
    \frac{2}{\epsilon^2}\|\rho\|_2^2 \leq  C\delta^2\Big[\frac{1}{\epsilon^2}\|\rho\|_2^2 \\
    +\|\nabla\rho\|_2^2 + \|\nabla\chi\|_2^2+ \|\rho_s\varphi+2\rho\phi_s\|_2^2 \Big] +\lambda_r(\|\rho_s\chi\|_2^2+\|\rho\|_2^2) \,.
\end{multline}
For a sufficiently small $\delta$ we obtain that $\lambda_r\geq0$. If $\lambda_r=0$ we
have, for a sufficiently small value of $\delta$, that
\begin{displaymath}
  \|\rho\|_2^2 + \|\nabla\chi\|_2^2 =0 \,.
\end{displaymath}
Consequently, any eigenfunction associated with any eigenvalue on the
imaginary axis (for which $\lambda_r=0$) must be of the form $v=Ki\rho_s$ where
$K\in\R$ is a constant. From the definition of $D(\B)$ we easily
conclude that $K=0$. Hence, $\lambda_r>0$.
\end{proof}
\begin{acknowledgment*} {\rm This research was supported by US-Israel BSF grant
    no.~2010194.}
\end{acknowledgment*}

\bibliography{potential}
\end{document}